\newtheorem{thm}{Theorem}[section]
\newtheorem{cor}{Corollary}[section]
\newtheorem{lem}{Lemma}[section]
\newtheorem{prop}{Proposition}[section]
\theoremstyle{definition}
\theoremstyle{remark}
\numberwithin{equation}{section}
\newcommand{\s}{\mathcal{S}}
\newcommand{\E}{\mathbb{E}}
\newcommand{\R}{\mathbb{R}}
\renewcommand{\P}{\mathbb{P}}
\numberwithin{equation}{section}
\newcommand{\bed}{\begin{displaymath}}
\newcommand{\eed}{\end{displaymath}}
\newcommand{\bea}{\bed\begin{array}{rl}}
\newcommand{\eea}{\end{array}\eed}
\newcommand{\barray}{\begin{array}{ll}}
\newcommand{\earray}{\end{array}}
\def\bar{\overline}
\def\hat{\widehat}
\def\a.s{\text{\;a.s.\;}}
\begin{document}
\bibliographystyle{plainnat}

\title[Persistence and extinction for stochastic difference equations]{Persistence and extinction for stochastic ecological models with internal and external variables}

\author[M. Bena\"{\i}m]{Michel Bena\"{\i}m}
\address{Institut de Math\'{e}matiques, Universit\'{e} de Neuch\^{a}tel, Rue Emile-Argand, Neuch\^{a}tel, Suisse-2000.}
\author[S.J. Schreiber]{Sebastian J. Schreiber$^*$}\footnotetext{$^*$Corresponding author: \href{mailto:sschreiber@ucdavis.edu}{sschreiber@ucdavis.edu}}
\address{Department of Evolution and Ecology and Center for Population Biology, University of California, Davis, California USA 95616}
\maketitle

\textbf{Abstract.}  The dynamics of species' densities depend both on internal and external variables. Internal variables include frequencies of individuals exhibiting different phenotypes or living in different spatial locations. External variables include abiotic factors or non-focal species. These internal or external variables may fluctuate due to stochastic fluctuations in environmental conditions.  The interplay between these variables and species densities can determine whether a particular population persists or goes extinct. To understand this interplay, we prove theorems for stochastic persistence and exclusion for stochastic  ecological difference equations accounting for internal and external variables. Specifically, we use a stochastic analog of average Lyapunov functions to develop sufficient and necessary conditions for (i) all population densities spending little time at low densities i.e. stochastic persistence, and (ii) population trajectories asymptotically approaching the extinction set with positive probability. For (i) and (ii), respectively, we provide quantitative estimates on the fraction of time that the system is near the extinction set, and the probability of asymptotic extinction as a function of the initial state of the system. Furthermore, in the case of persistence, we provide lower bounds for the expected time to escape neighborhoods of the extinction set. To illustrate the applicability of our results, we analyze stochastic models of evolutionary games, Lotka-Volterra dynamics, trait evolution, and spatially structured disease dynamics. Our analysis of these models demonstrates environmental stochasticity facilitates coexistence of strategies in the hawk-dove game, but inhibits coexistence in the rock-paper-scissors game and a Lotka-Volterra predator-prey model. Furthermore, environmental fluctuations with positive auto-correlations can promote persistence of evolving populations and persistence of diseases in patchy landscapes. While our results help close the gap between the persistence theories for deterministic and stochastic systems, we highlight several challenges for future research.

\section{Introduction}

In population biology, environmental stochasticity refers to the effects of fluctuations in environmental factors (e.g. temperature, precipitation) on demography. These demographic effects include fluctuations in survival, growth, and reproduction and can result in fluctuations in population densities, disease prevalence, and genotypic frequencies. This stochasticity can drive populations extinct~\citep{lewontin-cohen-69,gyllenberg-etal-94b,mclaughlin-etal-02,tpb-09,hening2017stochastic}, facilitate coexistence~\citep{hutchinson-61,chesson-warner-81,chesson-82,chesson-85,chesson-ellner-89,chesson-94,kuang-chesson-09,chesson-18}, reverse competitive dominance~\citep{benaim-lobry-16},  maintain or disrupt genetic diversity~\citep{gillespie-73,gillespie-78,gillespie-turelli-89}, and alter the persistence and spread of infectious diseases~\citep{altizer-etal-06}.

One approach to studying these effects is to analyze  stochastic difference or differential equations. For these equations, stochastic persistence corresponds to a statistical tendency for the population dynamics to avoid low densities~\citep{chesson-82,chesson-ellner-89,jdea-11}. Thus far, general methods for identifying stochastic persistence exist for  unstructured populations~\citep{jmb-11,hening-nguyen-18}, and structured (e.g. spatial or stage structure) discrete-time models on compact state spaces~\citep{jmb-14}. Few general results, however, exist for identifying when one or more species are asymptotically tending toward extinction. A notable exception is the work of \citet{hening-nguyen-18} who proved a general condition for extinction for stochastic differential equation models of unstructured populations. This condition, however, does not apply to models with intransitive outcomes, such as the evolutionary game of rock-paper-scissor~\citep{hofbauer-sigmund-03}. Moreover, none of these aforementioned persistence or extinction results apply to models with general forms of internal and external variables~\citep{jmb-18}.

All populations have internal and external variables that impact their population dynamics and, more generally, the ecological dynamics of the communities in which they reside. Internal variables are intrinsic to the population. For example, most populations exhibit a multitude of genotypes which vary in demographically important traits. Differential survival and reproduction of the genotypes alter the distributions of these traits. These frequency changes within the population may alter the rate at which the population grows creating a feedback between the population density  and the trait distribution (the internal variable)~\citep{vincent-brown-05,schoener-11,amnat-18a}. Similarly, for populations structured by age or space, the distribution of ages or spatial locations are internal variables that often generate feedbacks with the total population density~\citep{chesson-00a,jmb-13,hening-etal-18}.

External variables are extrinsic to the population. For example, environmental conditions (e.g. precipitation, temperature, nitrogen availability) often fluctuate over time and impact survivorship, growth, and reproduction of individuals. Indeed, the traditional view of environmental stochasticity corresponds to fluctuations in these external variables driving ecological dynamics, but not vice-versa. However, there is  evidence of active, bidirectional feedbacks between the abundance and composition of ecological communities and local weather patterns~\citep{eltahir-98,zeng-etal-99,kucharski-etal-13} and fire~\citep{staver-levin-12}. These bidirectional feedbacks also arise in ecological communities with ecosystem engineers, such as beavers and oysters, that modulate the environment in a way that influences their demography, as well as, other species in the community~\citep{jones-etal-94,cuddington-etal-09}. External variables also may include other species that interact with the focal species of a given model or system. 

As these internal and external variables may influence persistence and extinction, we develop criteria for these outcomes in stochastic difference equations allowing for these auxiliary variables. In section~\ref{sec:model}, we introduce the class of multi-species stochastic difference equations with auxiliary variables and our main assumptions. In section~\ref{sec:persist}, we discuss two concepts of stochastic persistence (almost-sure stochastic persistence and stochastic persistence in probability) and define realized per-capita growth rates with respect to  invariant measures. To prove results about both forms of stochastic persistence, we use the methods of \citet{benaim-18} that introduces stochastic analogs of \citet{hofbauer-81}'s average Lyapunov functions. The proofs in our setting (compact and discrete-time) simplify substantially when compared to the general results of \citep{benaim-18} and, thereby, highlight the core ingredients of these new methods. In section~\ref{sec:extinction}, we prove two new classes of extinction results for stochastic difference equations. One of these results shows if there is a stochastically persisting subcommunity that can not be invaded by the other species, then the other species are extinction prone. This result is a discrete-time analog of a result proven by \citep{hening-nguyen-18} for stochastic differential equation models of unstructured interacting species. Our second extinction result naturally complements our persistence theorem and provides a sufficient condition for convergence with positive probability to the extinction set, where one or more species have density zero. Unlike the first extinction result, the second extinction result allows us to handle rock-paper-scissor type dynamics. In section~\ref{sec:apps}, we apply our results to models of evolutionary games, multi-species interactions of the Lotka-Volterra type,  trait evolution in a fluctuating environment, and spatially structured disease dynamics. Each of these examples is designed to highlight different ways that our results can be used. In section~\ref{sec:end}, we conclude with discussing how our results relate to the existing literature and raising  open mathematical challenges.

\section{Models and Assumptions}\label{sec:model} We consider $n$ interacting species whose densities at time $t$ are given by $X_t=(X_t^1,X_t^2,\dots,X_t^n)$ which lies in $[0,\infty)^n=:\R^n_+$. The species dynamics interact with the auxiliary variable $Y_t$ which lies in $(-\infty,\infty)^k=:\R^k$. As discussed below, $Y_t$ may correspond to abiotic forcing, feedbacks with environment variables, the internal structure of each species, or any combination thereof. Both of these variables may be influenced by stochastic forces captured by a sequence of independent and identically distributed (i.i.d.) random variables $\xi_1,\xi_2,\dots$ taking values in a Polish space $\Xi$ i.e. a separable completely metrizable topological space. The fitness $f^i$ of individuals in species $i$ depends on the densities $X_t$ of all species, the auxiliary variables $Y_t$, and the random variable $\xi_{t+1}$ that captures the stochastic changes that occur over the time interval $(t,t+1].$ Similarly, the update rule $G$ of the auxiliary variables depends on $X_t$, $Y_t$, and $\xi_{t+1}$.

Under these assumptions, our model is
\begin{equation}\label{eq:main}
\begin{aligned}
X^i_{t+1} &= X^i_t f^i(X_t,Y_t,\xi_{t+1}) \quad i=1,2,\dots,n& \mbox{ (species densities)}\\
Y_{t+1}&= G(X_t,Y_t,\xi_{t+1})& \mbox{ (auxiliary variables)}.
\end{aligned}
\end{equation}
Our standing assumptions for \eqref{eq:main} are:
\begin{description}
\item [A1] For each $i=1,2,\dots,n$, the fitness function $f^i(z,\xi)$ is continuous in $z=(x,y)$, measurable in $(z,\xi)$, and strictly positive.
\item [A2] The auxiliary variable update function $G$ is continuous in $z=(x,y)$ and measurable in $(z,\xi)$.
\item [A3] There is a compact subset $\s$ of $\R^n_+\times \R^k$ such that all solutions $Z_t=(X_t,Y_t)$ to \eqref{eq:main} satisfy $Z_t\in \s$ for $t$ sufficiently large. 
\item [A4] For all $i=1,2,\dots,n$, $\sup_{z,\xi}|\log f^i(z,\xi)|<\infty$.
\end{description}
Assumptions \textbf{A1}--\textbf{A2} ensure that the Markov chain $Z_t$ is Feller (see section~\ref{subsec:cgr} for a definition) and that positive densities are mapped to positive densities. Assumption \textbf{A3} implies the dynamics remain bounded. Assumption \textbf{A4} uniformly bounds the log fitness of all species. These assumptions are often met in models as illustrated in Section~\ref{sec:apps}.

Some examples of the auxiliary variable $Y_t$ include:
\begin{description}
	\item[Markovian environmental forcing] To model Markovian fluctuations in the environment, $Y_{t+1}$ can represent the state of the environment over the time interval $(t,t+1]$ which determines the fitness of the species. Many forms of Markovian dynamics can be represented as $Y_{t+1}=G(Y_t,\xi_{t+1})$ where $\xi_1,\xi_2,\dots$ are i.i.d. For example, autoregressive processes of the form $Y_{t+1}=A Y_t+\xi_{t+1}$ where $A$ is a matrix whose spectral radius is less than one. In this case, if $\xi_t$ take values in a compact set, then $Y_t$ converges to a unique stationary distribution which is compactly supported~\citep{diaconis-freedman-99}. Alternatively, suppose there a finite number of environmental states $\{1,2,\dots,k\}$ with transition probabilities $p^{ij}$ i.e. $\P[Y_{t+1}=j|Y_t=i]=p^{ij}$. One can represent this Markov chain $Y_t$ as a composition of random maps by defining  $\xi_t=(\xi_t^1,\dots,\xi_t^k)$ to be a random vector such that $\P[\xi_t^i=j]=p^{ij}$ and defining $G(Y,\xi)=\xi^{Y}$. The representation of Markov chains as random maps can be done more generally (e.g. continuous state spaces in Theorem 1.2 of \cite{kifer-86}). For any of these choices of the auxiliary variable dynamics, the fitness function are of the form $f^i(X_t,Y_{t+1})=f^i(X_t,G(Y_t,\xi_{t+1})).$
	\item[Population structure] Many forms of discrete population structure (e.g. age, stages, spatial) can be represented by \eqref{eq:main} in the $Y$ dynamics. For example, if the interacting species live in $k$ distinct patches, then the auxiliary variable $Y_t=(Y_t^{ij})_{\{1\le i \le n, 1\le j\le k\}}$ can be a matrix where $Y^{ij}_t$ is the fraction of species $i$ living in patch $j$. The dynamics of these spatial frequencies of species $i$, even in the limit $X_t^i=0$, can be derived directly from a model description of the species densities in each patch. 
	\item[Trait evolution] The auxiliary variables can represent ecologically-important traits under selection. Changes in these traits can drive changes in population densities and, conversely, the densities of the species can select for different trait values. One classic model of trait evolution is due to \citet{lande-76} in which the trait $Y^i$ of species $i$ follows the gradient of the log fitness of the species i.e. $Y^i_{t+1}=Y_t^i + \alpha \frac{\partial \log f^i(X_t,Y_t,\xi_{t+1})}{\partial Y^i}$ for some constant $\alpha>0$ that determines the speed of evolution. This gradient type dynamic also appears in models of adaptive dynamics~\citep{metz-etal-96,michod-00,vincent-brown-05}. More generally, these auxiliary variables may keep track of the frequencies of different genotypes and their associated trait distribution~\citep{gillespie-78,maladaptive-19}.
	\item[Environmental feedbacks for ecosystem engineers] Ecosystem engineers are organisms, such as beavers and oysters, that modulate the abiotic environment in a manner influencing the demography of itself or other species~\citep{jones-etal-94,cuddington-etal-09}. For example, many species of oysters produce large, complex reef structures which are composed of living and dead oysters. These reefs provide a positive feedback on oyster growth and survival, and provide ecosystem services, including  water filtration, habitat and predator refuge for a variety of other species~\citep{moore2016demographic,moore-etal-18}. To model these system dynamics of oysters, modelers include the reef size $Y_t$ as an auxiliary variable $Y_t$ that can substantially alter the dynamics of the system~\citep{jordan-etal-11,moore-etal-18}.
\end{description}
Several illustrative examples of these different uses of the auxiliary variable $Y_t$ are given in Section~\ref{sec:apps}.

\section{Stochastic persistence}\label{sec:persist}

In this section, we begin by stating two definitions of stochastic persistence~\citep{jdea-11}: persistence in probability and almost-sure persistence. The first corresponds to what \citet{chesson-82} called stochastically bounded coexistence and takes an ensemble point of view. The second takes the perspective of a single, typical realization of the Markov chain~\citep{jmb-11}. To derive a persistence criterion, we define the realized per-capita growth rates when species are infinitesimally rare. This approach to characterizing coexistence in stochastic models goes back to the work of \citet{turelli-81}. Using these realized per-capita growth rates, we introduce the stochastic analog of the realized community per-capita growth  rate when rare due to \citet{hofbauer-81}. This community growth rate allows us to define the stochastic analog of \citet{hofbauer-81}'s average Lyapunov functions and, thereby, prove sufficient conditions for stochastic persistence of either form.

\subsection{Persistence in probability and almost-sure persistence} As we just mentioned, there are two ways to think about the asymptotic behavior of $\{Z_t\}_{t=0}^\infty$. First, one can ask what is the distribution of $Z_t=(X_t,Y_t)$ far into the future. For example, what is the probability that each species' density is greater than $\epsilon$ in the long term  i.e. $\P[X_t\ge (\epsilon,\dots,\epsilon)]$ for large $t$? The answer to this question provides information about what happens across many independent realizations of the ecological dynamics. Alternatively, one might be interested about the statistics associated with a single realization of the process i.e. a single time series. For instance, one could ask what fraction of the time was each species' density greater than $\epsilon$?

For both ways of quantifying persistence, we define the \emph{extinction set}
\[
\s_0=\{(x,y)\in \s: \min_i x^i =0\}
\]
which corresponds to the set of population states where at least one of the species or genotypes is absent. We also define, for any $\eta>0$, the \emph{$\eta$-neighborhood of the extinction set} by
\[
\s_\eta=\{(x,y)\in \s: \min_i x^i \le \eta\}
\]
that corresponds to states at which at least one of the species has a density less than $\eta$.

With respect to the first approach to studying asymptotic behavior, the model \eqref{eq:main} is \emph{stochastically persistent in probability}~\citep{chesson-82,chesson-ellner-89,jdea-11}  if for all $\varepsilon>0$ there exists an $\eta>0$ such that
\[
\limsup_{t\to\infty}\P_z[X_t\in \s_\eta]\le \varepsilon \mbox{ whenever } z=(x,y) \in \s\setminus \s_0.
\]
In words, there is an arbitrarily small probability of any of the species being at arbitrarily low densities anytime in the future. Moreover, while these uppers bounds depend on initial conditions in the short-term, they are independent of initial conditions (provided all species are initially present) in the long-term. 

For the second approach to study the asymptotic behavior of $Z_t=(X_t,Y_t)$, we define the \emph{occupation measure} by
\[
\Pi_t = \frac{1}{t} \sum_{s = 1}^t \delta_{Z_s}\]
where $\delta_{Z_s}$ denotes a Dirac measure at $Z_s$ i.e. $\delta_{Z_s}(A)=1$ if $Z_s\in A$ and $0$ otherwise for any (Borel) set $A\subset \s$.  $\Pi_t(A)$ equals the proportion of time the community spends in $A$ up to time $t$. Our model \eqref{eq:main} is \emph{almost surely stochastically persistent}~\citep{jmb-11,jdea-11} if for all $\epsilon>0$ there exists $\eta>0$ such that
\[
\limsup_{t\to\infty}\Pi_t(\s_\eta)\le \epsilon \mbox{ with probability one}
\]
whenever $Z_0=z_0\in \s\setminus \s_0.$ This form of persistence implies that fraction of time spent by the community in $\s_\eta$ goes to zero as $\eta$ goes to zero. As with persistence in probability, these upper bounds are asymptotically independent of the initial condition.

Regarding the definition of stochastic persistence, \citet{chesson-82} wrote:
\begin{quote}
This criterion requires that the probability of observing a population below any given density, should converge to zero with density, uniformly in time. Consequently it places restrictions on the expected frequency of fluctuations to low population levels. Given that fluctuations in the environment will continually perturb population densities, it is to be expected that any nominated population density, no matter how small, will eventually be seen. Indeed this is the usual case in stochastic population models and is not an unreasonable postulate about the real world. Thus a reasonable persistence criterion cannot hope to do better than place restrictions on the frequencies with which such events occur. \end{quote}

\subsection{Invariant measures and realized per-capita growth rates}
To determine whether stochastic persistence in either form occurs, we need to understand what happens to a species when it becomes rare in the community. In particular, does the species tend to increase when rare or decrease when rare? To quantify this tendency, we consider the per-capita growth rate of the species averaged over the fluctuations in the community dynamics $X_t$, the auxiliary variable $Y_t$, and the i.i.d. random variables $\xi_{t+1}$. These averages will be taken over invariant probability measures which  represent the stationary dynamics of the common species. Recall, a Borel probability measure $\mu$ on $\s$ is \emph{invariant} if for all continuous functions $h:\s\to\R$
\[
\int_\s h(z)\mu(dz)= \int_\s \E_z[h(Z_1)]\mu(dz)
\]
where
\[
\E_z[h(Z_1)]=\E[h(Z_1)|Z_0=z].
\]
An invariant probability measure $\mu$ is \emph{ergodic} if it can not be written as a non-trivial convex combination of invariant probability measures. 

Given any species $i$, the set $\s^i=\{z=(x,y)\in \s: x^i>0\}$ is the set of states for which species $i$ has positive density. For an ergodic measure $\mu$, $\mu(\s^i)$ is either $1$ or $0$ for any species $i$ i.e. either supports the presence of species $i$ or not. Hence, for an ergodic measure $\mu$, we define $S(\mu)=\{1\le i\le n: \mu(\s^i)=1\}$ as the \emph{species support of $\mu$}. 

To understand coexistence, we can ask: what happens to a species when it becomes rare in the community? To this end, imagine that some species are infinitesimally rare and the dynamics of the remaining, common species are characterized by an ergodic measure $\mu$. As $\mu$ supports only a subset of species, $S(\mu)$ is a proper subset of $\{1,\dots,n\}$. For one of the infinitesimally rare species $i\notin S(\mu)$, its rate of growth is determined by the linearized model
\[
\widetilde X^i_{t+1}=f^i(Z_t,\xi_{t+1})\widetilde X_t^i \mbox{ with }Z_t=(X_t,Y_t)
\] 
for $\mu$-almost every $Z_0=z_0\in \s_0$. The solution to this stochastic linear difference equation is
\[
\widetilde X^i_t=\prod_{\tau=t-1}^0 f^i(Z_\tau,\xi_{\tau+1}) \widetilde X_0^i.
\]
Taking the $\log$ of this solution, dividing by $t$, and taking the limit (provided it exists), species $i$ tends to increase in density if
\[
\lim_{t\to\infty} \frac{1}{t}\sum_{\tau=0}^{t-1}\log f^i(Z_\tau,\xi_{\tau+1})>0
\]
and tends to decrease if
\[
\lim_{t\to\infty} \frac{1}{t}\sum_{\tau=0}^{t-1}\log f^i(Z_\tau,\xi_{\tau+1})<0.
\]
The following proposition shows that for $\mu$-almost every $Z_0=z_0\in \s$,
\[
r^i(\mu)=\lim_{t\to\infty} \frac{1}{t}\sum_{\tau=0}^{t-1}\log f^i(Z_\tau,\xi_{\tau+1})
\]
where $r^i(\mu)$ is  the \emph{realized per-capita growth rate} of species $i$:
\begin{equation}\label{eq:per-capita}
r^i(\mu)=\int_\s \E[\log f^i(z,\xi_t) ] \mu(dz).
\end{equation}
For any infinitesimally rare species $i\notin S(\mu)$, $r^i(\mu)$ determines how quickly the species tends to increase when introduced at low densities. Alternatively,  the realized per-capita growth rate $r^i(\mu)$ equals zero for any species $i$ supported by $\mu$. Intuitively, if the species' density is neither asymptotically growing or declining, then the species' realized per-capita growth must be zero. A proof of the proposition is given in section~\ref{proof:invasion}.

\begin{prop}
\label{prop:invasion}  Let $\mu$ be an invariant  probability measure and $i \in \{1, \ldots, n \}.$ Then there exists a bounded Borel map $\widehat{r}^i:\s\to \R$ such that:
\begin{enumerate}
\item[(i)] With probability one and for $\mu$-almost every $z\in \s$
\begin{equation}\label{eq:r}
\lim_{t\to\infty} \frac{1}{t} \sum_{\tau=0}^{t-1} \log f^i(Z_\tau,\xi_{\tau+1}) =\hat{ r}^i(z)\mbox{ when }Z_0=z;
\end{equation}
and
\begin{equation}\label{eq:r2}
\lim_{t\to\infty} \frac{1}{t} \sum_{\tau=0}^{t-1} \E_z\left[ \log f^i(Z_\tau,\xi_{\tau+1})\right] =\hat{ r}^i(z)\mbox{ when }Z_0=z;
\end{equation}
\item[(ii)] $\int_{\s} \widehat{r}^i(z) \mu(dz) = r^i(\mu);$
\item[(iii)] if $\mu$ is ergodic, then $\widehat{r}^i(z) = r^i(\mu)$ $\mu$-almost surely and $r^i(\mu)=0$ for all $i\in S(\mu).$
\end{enumerate}
\end{prop}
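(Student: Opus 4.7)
My approach is to combine Birkhoff's ergodic theorem on the canonical path space (equipped with the stationary law $\P_\mu = \int \P_z\,\mu(dz)$) with a martingale decomposition that isolates the effect of the driving noise $\xi_{\tau+1}$. Set $g(z,\xi) := \log f^i(z,\xi)$, uniformly bounded by \textbf{A4}, and $h(z) := \E[g(z,\xi_1)]$; both are bounded and Borel.

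First I write
\begin{equation*}
\sum_{\tau=0}^{t-1} g(Z_\tau,\xi_{\tau+1}) \;=\; \sum_{\tau=0}^{t-1} h(Z_\tau) \;+\; M_t, \qquad M_t := \sum_{\tau=0}^{t-1}\bigl(g(Z_\tau,\xi_{\tau+1}) - h(Z_\tau)\bigr),
\end{equation*}
so that $M_t$ is an $\F_t$-martingale (with $\F_t = \sigma(Z_0,\xi_1,\dots,\xi_t)$) with uniformly bounded increments; Azuma--Hoeffding yields $M_t/t \to 0$ $\P_z$-a.s.\ for every $z$. The remaining task is the Markov-chain time average $\frac{1}{t}\sum_{\tau<t} h(Z_\tau)$. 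Under $\P_\mu$, the chain $(Z_t)$ is stationary, so Birkhoff produces a bounded shift-invariant random variable $H$ with $\frac{1}{t}\sum_{\tau<t} h(Z_\tau)\to H$ $\P_\mu$-a.s. Define $\widehat{r}^i(z) := \E_z[H]$, which is bounded and Borel.

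The crucial identification is $H = \widehat{r}^i(Z_0)$ $\P_\mu$-a.s., which is what makes the limit deterministic once $Z_0$ is fixed. Shift-invariance of $H$ combined with the Markov property gives $\E[H\mid \F_n] = \E[H\circ\theta^n\mid \F_n] = \E_{Z_n}[H] = \widehat{r}^i(Z_n)$, so $(\widehat{r}^i(Z_n))_n$ is a bounded $\P_\mu$-martingale converging a.s.\ to $H$ by L\'evy's theorem. But this martingale is also strictly stationary under $\P_\mu$ (since $Z_n$ has law $\mu$ for every $n$), and any stationary bounded martingale is a.s.\ constant in $n$: the orthogonality identity $\mathrm{Var}(\widehat{r}^i(Z_n)) = \mathrm{Var}(\widehat{r}^i(Z_0)) + \mathrm{Var}(\widehat{r}^i(Z_n)-\widehat{r}^i(Z_0))$ combined with stationarity forces $\widehat{r}^i(Z_n) = \widehat{r}^i(Z_0)$, hence $H = \widehat{r}^i(Z_0)$. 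Disintegrating $\P_\mu = \int\P_z\,\mu(dz)$ and using Fubini, \eqref{eq:r} holds $\P_z$-a.s.\ with limit $\widehat{r}^i(z)$ for $\mu$-a.e.\ $z$. Equation \eqref{eq:r2} follows by taking $\P_z$-expectations of the decomposition (using $\E_z M_t = 0$) and applying bounded convergence. Part (ii) is then immediate: integrating \eqref{eq:r} against $\mu$ and using bounded convergence together with stationarity gives $\int \widehat{r}^i\,d\mu = \int h\,d\mu = r^i(\mu)$.

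For (iii), ergodicity of $\mu$ makes the invariant $\sigma$-algebra $\P_\mu$-trivial, so $H$, and therefore $\widehat{r}^i$, is $\mu$-a.s.\ equal to the constant $r^i(\mu)$. To show $r^i(\mu)=0$ whenever $i\in S(\mu)$, iterate the species equation to obtain $\log X^i_t = \log X^i_0 + \sum_{\tau<t} g(Z_\tau,\xi_{\tau+1})$; since $\mu(\s^i)=1$, $\log X^i_0$ is $\mu$-a.s.\ finite, and dividing by $t$ and applying \eqref{eq:r} yields $\frac{1}{t}\log X^i_t \to r^i(\mu)$ $\P_\mu$-a.s. Compactness of $\s$ from \textbf{A3} gives a uniform upper bound $X^i_t \le C$, forcing $r^i(\mu)\le 0$; conversely, if $r^i(\mu)<0$ then $X^i_t\to 0$ exponentially fast, contradicting the Birkhoff limit $\frac{1}{t}\sum_{\tau<t}\mathbf{1}_{\{X^i_\tau\ge\eps\}}\to \mu\{x^i\ge\eps\}>0$, which is positive for $\eps$ small enough because $\mu(x^i>0)=1$. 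The main obstacle is the stationary-martingale step identifying $H$ with $\widehat{r}^i(Z_0)$; the remaining pieces combine Birkhoff and a martingale strong law in a routine way.
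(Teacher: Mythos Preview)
Your proof is correct and follows essentially the same architecture as the paper's: a martingale decomposition to kill the $\xi_{\tau+1}$ fluctuations, Birkhoff's ergodic theorem on the stationary chain for the $h(Z_\tau)$ averages, and a recurrence argument for $r^i(\mu)=0$ when $i\in S(\mu)$. The differences are cosmetic: the paper invokes the general martingale strong law via the angle-brackets process where you use Azuma--Hoeffding (both valid since increments are bounded), cites Meyn--Tweedie's Markov-chain ergodic theorem directly rather than giving your self-contained stationary-martingale identification of $H$ with $\widehat r^i(Z_0)$, obtains \eqref{eq:r2} from \eqref{eq:r} by uniform integrability rather than your use of $\E_z M_t=0$ plus bounded convergence, and phrases the $r^i(\mu)=0$ step via Poincar\'e recurrence instead of Birkhoff on the indicator $\mathbf 1_{\{X^i\ge\eps\}}$---these are equivalent here.
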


\subsection{Realized per-capita community growth rates and stochastic persistence}\label{subsec:cgr}

Following the approach introduced by Josef Hofbauer~\citep{hofbauer-81,hofbauer-sigmund-98}, our criterion for stochastic persistence is 
\begin{equation}\label{eq:criterion}
\mbox{there exist positive }p^1,\dots,p^n \mbox{ s.t. } \sum_i p^i r^i(\mu)>0\mbox{ for all ergodic $\mu$ with $\mu(\s_0)=1$}.
\end{equation}
Criterion \eqref{eq:criterion} requires that there is some weighting of the species such that the weighted average of the realized per-capita growth is positive for all ergodic measures supporting a subset of the community. In particular, for each ergodic measure $\mu$, \eqref{eq:criterion} requires there is at least one species such that one of the missing species can increase when rare i.e. $r^i(\mu)>0$ for some $i\notin S(\mu)$. In the case of two species models, positive realized per-capita growth rates for at least one missing species is sufficient to ensure that there are positive weights $p^1,p^2$ satisfying \eqref{eq:criterion}. However, for more species, the criterion is more subtle as we illustrate in the evolutionary game of rock-paper-scissors (see section~\ref{sec:eg}).

To see why criterion \eqref{eq:criterion} is useful, we follow the approach developed in \citep{benaim-18,benaim-lobry-16}. To describe this approach, we observe that assumptions \textbf{A1}-\textbf{A2} imply that \eqref{eq:main} is (weak) Feller~\citep{meyn-tweedie-09}: the transition operator $P$ takes continuous functions $h:\s\to\R$ to continuous functions $Ph$ defined by
\[
(Ph)(z)=\E_z[h(Z_1)].
\]
For any $t\ge 0$, let $P_t$ denote the  $t$-time step transition operator of our Markov model \eqref{eq:main} defined by
\[
(P_th)(z)=\E_z[h(Z_t)] \mbox{ for all }z\in \s.
\]
Now, if we define
\[
V(z)=-\sum_{i=1}^n p^i \log(x^i)\mbox{ for }z=(x,y)\in \s,
\]
then 
\[
P_tV(z)-V(z)=-\sum_{i=1}^n p^i \left(\sum_{\tau=0}^{t-1}\E_z[\log f^i (Z_\tau, \xi_{\tau+1})]\right).
\]
Proposition~\ref{prop:invasion} and criterion \eqref{eq:criterion} imply that there is a $T\ge 1$ and $\alpha>0$ such that (see proof in section~\ref{proof:persist})
\begin{equation}\label{eq:key1}
P_TV(z)-V(z)\le -\alpha \mbox{ for all }z\in \s_0.
\end{equation}
For this choice of $T\ge 1$, we prove the following key result in section~\ref{proof:key}.

\begin{prop}\label{prop:key} Assume \eqref{eq:criterion} holds. 
For $\theta>0$ sufficiently small, there exist $\rho \in (0,1)$ and $\beta>0$ such that
\[
(P_T V_\theta)(z) \le \rho V_\theta(z) + \beta  \mbox{ for all }z\in \s\setminus\s_0 \mbox{ where }V_\theta(z)=\exp(\theta V(z)).
\]
\end{prop}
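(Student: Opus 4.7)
The plan is to treat $V_\theta = e^{\theta V}$ as an approximate supersolution for $P_T$ via a second-order Taylor expansion of the exponential, then split $\s\setminus\s_0$ into a neighborhood of the extinction set (where a geometric contraction emerges) and its complement (where $V_\theta$ is uniformly bounded and produces the constant $\beta$). The first hurdle is that $V\equiv+\infty$ on $\s_0$, so the quantity ``$P_TV-V$'' in \eqref{eq:key1} must be read through the bounded random variable
\[
U := -\sum_{i=1}^n p^i \sum_{\tau=0}^{T-1}\log f^i(Z_\tau,\xi_{\tau+1}),
\]
which coincides with $V(Z_T)-V(Z_0)$ on $\s\setminus\s_0$ (positivity of $f^i$ propagates positive densities, so both terms are finite and telescoping applies) and is pointwise bounded, $|U|\le M$, by A4. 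Setting $\psi(z):=\E_z[U]$, the Feller property (from A1, A2) shows that $\psi$ is continuous on all of $\s$, while \eqref{eq:key1} reads $\psi\le -\alpha$ on the compact set $\s_0$. Continuity then yields some $\eta>0$ with $\psi\le -\alpha/2$ throughout $\s_\eta$.

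For the expansion step, the elementary inequality $e^x\le 1+x+\tfrac{1}{2}x^2 e^K$, valid for $|x|\le K$, applied with $x=\theta U$ and $K=\theta M$ gives $e^{\theta U}\le 1+\theta U+C\theta^2$ for any $\theta\in(0,1]$, where $C:=\tfrac{1}{2}M^2 e^M$. Since $V_\theta(Z_T)=V_\theta(z)\,e^{\theta U}$ on $\s\setminus\s_0$, taking $\E_z$ yields the pointwise bound
\[
(P_TV_\theta)(z)\;\le\;V_\theta(z)\bigl(1+\theta\psi(z)+C\theta^2\bigr),\qquad z\in\s\setminus\s_0.
\]

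Finally, split into regimes. On $\s_\eta\setminus\s_0$, $\psi(z)\le -\alpha/2$ reduces the factor to $1-\theta\alpha/2+C\theta^2\le 1-\theta\alpha/4=:\rho$, which lies in $(0,1)$ provided $\theta\le\alpha/(4C)$. On the complement $\s\setminus\s_\eta$, every $x^i\ge\eta$ forces $V(z)\le\bigl(\sum_i p^i\bigr)\log(1/\eta)$, so $V_\theta(z)$ is uniformly bounded by some constant $B$; combined with $\|\psi\|_\infty<\infty$ (by continuity on the compact set $\s$), this gives $(P_TV_\theta)(z)\le B(1+\theta\|\psi\|_\infty+C\theta^2)=:\beta$. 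Since $V_\theta\ge 0$, the two regime-bounds combine into $(P_TV_\theta)(z)\le\rho V_\theta(z)+\beta$ on all of $\s\setminus\s_0$.

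The main obstacle is conceptual rather than computational: making sense of \eqref{eq:key1} at points of $\s_0$ where $V$ is infinite, and then propagating the strict negativity of $\psi$ from the closed set $\s_0$ into a full neighborhood $\s_\eta$. Both are dealt with by rewriting everything in terms of the bounded, continuous function $\psi$, which is exactly where A1, A2, and A4 are indispensable. Once continuity of $\psi$ is in hand, the Taylor argument produces the contraction automatically, and the cutoff at $\s_\eta$ handles the ``bounded'' region to create $\beta$.
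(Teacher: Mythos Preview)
Your proof is correct and follows the same strategy as the paper: factor $V_\theta(Z_T)=V_\theta(z)e^{\theta U}$, Taylor-expand $e^{\theta U}$ to second order using the uniform bound from A4, and use \eqref{eq:key1} to make the linear term strictly negative. The paper's written proof is actually terser than yours---it stops at the bound $(1-\theta\alpha+\theta^2 e^C)V_\theta(z)$ without explicitly splitting into $\s_\eta$ and its complement or deriving $\beta$---so your continuity-of-$\psi$ argument extending the drift inequality from $\s_0$ to $\s_\eta$, and your handling of the bounded region to produce $\beta$, fill in details that the paper leaves implicit.
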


Using Proposition~\ref{prop:key}, criterion \eqref{eq:criterion} implies stochastic persistence in probability. To understand why, note that the $t$-step transition operator $P_t$ maps Borel probability measures $\mu$ to Borel probability measures $\mu P_t$ by duality i.e. $\mu P_t$ is the probability measure defined by
\[
\int_\s h(z)\,(\mu P_t)(dz):=\int_\s (P_th)(z) \mu(dz) \mbox{ for all continuous }h:\s\to\R.
\]
Namely, if $Z_0=z$ is drawn randomly according to $\mu(dz)$, then $\mu P_t$ is the law of $Z_t$. In particular, if $\mu=\delta_{z_0}$ for some $z_0\in \s$, then $\delta_{z_0}P_t$ is the law of $Z_t$ given $Z_0=z.$

Now, for simplicity, assume $T=1$  (see section~\ref{proof:persist} for the general case $T\ge 1$). Given any $z_0\in \s\setminus\s_0$, any integer $t$ and $\eta>0$, Proposition~\ref{prop:key} implies
\begin{equation}~\label{ee}
\P_{z_0}[Z_t\in \s_\eta]\min_{z\in \s_\eta\setminus\s_0}V_\theta(z)\le \int_\s V_\theta(z)d(\delta_{z_0}P_t) \le (\rho)^t V_\theta(z_0)+\frac{\beta}{1-\rho}.
\end{equation}
For $\eta\le 1$ and $z=(x,y)\in \s_\eta\setminus \s_0$,
\begin{equation}\label{eq:lb}
V_\theta(z)=\prod_{i=1}^n (x^i)^{-\theta\,p^i}\ge a_0(\eta)^{-b} \mbox{ where }a_0=\min_{z=(x,y)\in\s\setminus \s_0}\|x\|^{b(1-n)},b=\theta\max_i p^i.
\end{equation}
Thus, for $\eta\le 1$, inequalities \eqref{ee}--\eqref{eq:lb} and $\rho<1$ imply
\[
\limsup_{t\to\infty}\P_{z_0}[Z_t\in \s_\eta]\le a(\eta)^b \mbox{ where }a=\frac{\beta}{a_0(1-\rho)}
\]
which implies persistence in probability.

To show that criterion \eqref{eq:criterion} implies almost sure stochastic persistence requires two ingredients. First, using an extension of an argument presented in \citet{jmb-11}, we show in section~\ref{proof:persist} that if $Z_0\in \s\setminus \s_0$, then all weak* limit points of $\Pi_t$ are, with probability one, invariant probability measures supported by $\s\setminus \s_0$. Second, we show that the weight placed by positive invariant probability measures  near the boundary can be controlled uniformly. Specifically, if $\mu$ is an invariant probability measure with $\mu(\s_0)=0$ and $\int V_\theta(z)\mu(dz)$ is finite (the proof of the general case without this integrability assumption appears in section~\ref{proof:persist}), then Proposition~\ref{prop:key} implies for any $k\ge 1$
\[
\int V_\theta(z) \mu(dz)=\int (P_{kT}V_\theta)(z) \mu(dz) \le (\rho)^k \int V_\theta(z) \mu(dz) +\frac{\beta}{1-\rho}.
\]
As $\rho<1$, taking the limit as $k\to\infty$ yields
\[
\int V_\theta(z) \mu(dz) \le \frac{\beta}{1-\rho}.
\]
Thus, for any $\eta>0$,
\[
\mu(\s_\eta)\min_{z\in \s_\eta}V_\theta(z)\le\int V_\theta(z) \mu(dz) \le \frac{\beta}{1-\rho}.
\]
Inequality \eqref{eq:lb} implies that
\[
\mu(\s_\eta)\le a (\eta)^b \mbox{ for all }\eta\le 1  \mbox{ where }a=\frac{\beta}{a_0(1-\rho)}.
\]

\begin{thm}\label{thm:persist} If criterion \eqref{eq:criterion} holds,
then \eqref{eq:main} is almost-surely stochastically persistent and stochastically persistent in probability. In particular, there exists  $a,b>0$ such that for all $\eta\le 1$ and $z\in \s\setminus \s_0$
\[
\mbox{\rm(persistence in probability) }\limsup_{t\to\infty}\P_z\left[Z_t\in \s_\eta \right]\le a (\eta)^b 
\]
and
\[
\mbox{\rm(almost-sure persistence) }\limsup_{t\to\infty}\Pi_t\left[ \s_\varepsilon \right]\le a (\eta)^b \mbox{ for }Z_0=z.
\mbox{ almost surely.}\]
\end{thm}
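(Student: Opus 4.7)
The plan is to build both persistence statements from Proposition~\ref{prop:key}, filling the two gaps left open in the sketch preceding the theorem: the persistence-in-probability argument must be lifted from $T=1$ to general $T\ge 1$, and the almost-sure argument must be freed from the auxiliary hypothesis $\int V_\theta\,d\mu<\infty$. I would first iterate Proposition~\ref{prop:key} to obtain
\[
(P_{kT}V_\theta)(z)\le \rho^k V_\theta(z)+\frac{\beta}{1-\rho}\quad\text{for every }k\ge 0,\ z\in \s\setminus \s_0.
\]
For persistence in probability, write $t=kT+r$ with $0\le r<T$; assumption \textbf{A4} (uniform boundedness of $|\log f^i|$) controls the multiplicative distortion of $V_\theta$ under the residual $r$ steps by a constant $C_r$, so $(P_tV_\theta)(z)\le C_r\bigl(\rho^k V_\theta(z)+\beta/(1-\rho)\bigr)$. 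Markov's inequality combined with the pointwise lower bound \eqref{eq:lb} on $\s_\eta\setminus\s_0$ then reproduces $\limsup_{t\to\infty}\P_z[Z_t\in \s_\eta]\le a\eta^b$, exactly as in the $T=1$ case of the excerpt.

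For almost-sure persistence I would proceed in two steps. First, adapting the argument of \citet{jmb-11}, Feller continuity (\textbf{A1}-\textbf{A2}) and compactness of $\s$ (\textbf{A3}) already ensure that for $Z_0=z_0\in \s\setminus\s_0$ every weak-$*$ limit point of $\Pi_t$ is almost surely an invariant probability measure. The delicate point is ruling out mass on $\s_0$, which I would establish via the pathwise bound
\[
\limsup_{t\to\infty}\frac{1}{t}\sum_{s=0}^{t-1}V_\theta(Z_s)<\infty\quad\text{almost surely}.
\]
Viewing $V_\theta(Z_{kT})$ as a near-supermartingale with uniform additive defect $\beta$ (by Proposition~\ref{prop:key}), this follows from a Borel-Cantelli / martingale-convergence argument. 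Since $V_\theta=+\infty$ on $\s_0$, no limit measure can charge $\s_0$.

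Second, I must bound $\int V_\theta\,d\mu$ for every invariant $\mu$ with $\mu(\s_0)=0$ without presupposing $V_\theta\in L^1(\mu)$. Truncate via $W_N=V_\theta\wedge N$. Invariance gives $\int W_N\,d\mu=\int P_{kT}W_N\,d\mu$, and combining $W_N\le N$ with the iterated bound above yields
\[
P_{kT}W_N\le \min\!\left(N,\ \rho^k V_\theta+\frac{\beta}{1-\rho}\right).
\]
Because $V_\theta<\infty$ $\mu$-a.s., the right side converges pointwise to $\min(N,\beta/(1-\rho))$ as $k\to\infty$ while remaining dominated by $N$, so dominated convergence yields $\int W_N\,d\mu\le \beta/(1-\rho)$ for every $N>\beta/(1-\rho)$; monotone convergence in $N$ then gives $\int V_\theta\,d\mu\le \beta/(1-\rho)$. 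Inequality \eqref{eq:lb} converts this into the uniform bound $\mu(\s_\eta)\le a\eta^b$, and combined with the weak-$*$ convergence of $\Pi_t$ to invariant measures this produces the claimed almost-sure estimate. The principal obstacle is the first step of the almost-sure part: $V_\theta$ blows up on $\s_0$, so weak compactness alone cannot exclude escape to the boundary, and the real work lies in promoting the per-block expectation inequality of Proposition~\ref{prop:key} to a genuine pathwise control of the empirical averages of $V_\theta$.
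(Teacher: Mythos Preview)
Your treatment of persistence in probability and of the truncation argument for invariant measures (step two of the almost-sure part) is essentially what the paper does: the paper iterates Proposition~\ref{prop:key}, writes $t=\ell T+s$, bounds the finitely many residual steps, and for invariant $\mu$ with $\mu(\s_0)=0$ passes through $V_\theta\wedge M$ using Jensen's inequality and dominated convergence, just as you propose.

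Where you and the paper diverge is in step one of the almost-sure argument, the claim that weak-$*$ limit points of $\Pi_t$ put no mass on $\s_0$. You propose to obtain this from a pathwise bound $\limsup_{t}\frac{1}{t}\sum_{s<t}V_\theta(Z_s)<\infty$ via a near-supermartingale argument on $V_\theta(Z_{kT})$, and you correctly flag this as the main obstacle. This route can be made to work (e.g.\ by halving $\theta$ so that second moments of $V_\theta(Z_{kT})$ stay bounded and the martingale part of $\sum_k V_\theta(Z_{kT})$ obeys a strong law), but it is more delicate than necessary. The paper sidesteps $V_\theta$ entirely at this stage: since $\s$ is compact, $\sum_i p^i\log X_t^i$ is bounded above, hence $\limsup_{t}\frac{1}{t}\sum_i p^i\log X_t^i\le 0$; combining this with Lemma~\ref{th:birk2}(i) applied to $g(z,\xi)=\sum_i p^i\log f^i(z,\xi)$ yields $\sum_i p^i r^i(\mu)\le 0$ for every weak-$*$ limit point $\mu$. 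Writing $\mu=(1-\alpha)\nu_0+\alpha\nu_1$ with $\nu_0(\s_0)=1$ and $\nu_1(\s_0)=0$, Proposition~\ref{prop:invasion}(iii) gives $\sum_i p^i r^i(\nu_1)=0$, so $(1-\alpha)\sum_i p^i r^i(\nu_0)\le 0$; criterion~\eqref{eq:criterion} then forces $\alpha=1$. This argument is shorter, uses only the linear function $V$ rather than $V_\theta$, and exploits the persistence criterion a second time in a way your approach does not.
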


\subsection{Escaping extinction risk.} When stochastic persistence occurs, it is natural to ask: ``if one or more species are at very low densities, how long before they reach higher densities?'' For $\eta>0$ sufficiently small, the following proposition provides upper bounds on the time for all species densities to exceed $\eta$. 

\begin{prop} Assume that Proposition~\ref{prop:key} holds. There exists $\eta>0$ such that if $Z_0=(x,y)\in \s_\eta\setminus \s_0$ and \[\tau=\inf_{k\ge 1}\{k:Z_{kT}\in \s\setminus \s_\eta\},\] then \begin{equation}\label{eq:escape}\P[\tau>k]\le \prod_{i=1}^n (x^i)^{-\theta p^i}(\rho)^k.\end{equation}
\end{prop}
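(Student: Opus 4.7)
The plan is to combine the drift inequality from Proposition~\ref{prop:key} with the fact that $V_\theta$ blows up near $\s_0$, in order to upgrade the affine drift $P_T V_\theta \le \rho V_\theta + \beta$ to a \emph{pure} multiplicative contraction $P_T V_\theta \le \rho V_\theta$ on $\s_\eta$ for small enough $\eta$. Once that is in hand, a stopped supermartingale argument delivers the stated geometric tail bound.

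The first step is to record that, on $\s_\eta$,
\[
V_\theta(z) = \prod_{i=1}^n (x^i)^{-\theta p^i} \ge L(\eta),
\]
where $L(\eta)\to\infty$ as $\eta\to 0$ because at least one coordinate $x^i$ is bounded above by $\eta$ while the other coordinates are bounded above by compactness of $\s$. (Reading the $\rho$ of the proposition as a contraction rate strictly larger than the one produced by Proposition~\ref{prop:key}, call the latter $\rho_0$.) Choose $\eta>0$ small enough that $L(\eta)\ge \max\{1,\,\beta/(\rho-\rho_0)\}$. Then Proposition~\ref{prop:key} gives, for every $z\in\s_\eta\setminus\s_0$,
\[
P_T V_\theta(z)\le \rho_0 V_\theta(z)+\beta \le \rho_0 V_\theta(z)+(\rho-\rho_0)V_\theta(z)=\rho V_\theta(z).
\]

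The second step is to encode this contraction in a supermartingale. Define
\[
N_k := V_\theta(Z_{kT})\,\rho^{-k}\,\1_{\{\tau>k\}}, \qquad k\ge 0.
\]
Since $\{\tau>k+1\}=\{\tau>k\}\cap\{Z_{(k+1)T}\in\s_\eta\}$ and $f^i>0$ preserves $\s\setminus\s_0$, the Markov property together with the pure contraction above yields, on $\{\tau>k\}$,
\[
\E[V_\theta(Z_{(k+1)T})\1_{\{Z_{(k+1)T}\in\s_\eta\}}\mid\mathcal{F}_{kT}]\le P_T V_\theta(Z_{kT})\le \rho\, V_\theta(Z_{kT}),
\]
and therefore $\E[N_{k+1}\mid\mathcal{F}_{kT}]\le N_k$. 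Iterating gives $\E[N_k]\le N_0=V_\theta(x,y)$. On $\{\tau>k\}$ we have $V_\theta(Z_{kT})\ge L(\eta)\ge 1$, so $N_k\ge \rho^{-k}\1_{\{\tau>k\}}$, and taking expectations produces
\[
\P[\tau>k]\le V_\theta(x,y)\,\rho^k=\prod_{i=1}^n (x^i)^{-\theta p^i}\rho^k,
\]
as desired.

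The only real obstacle is bookkeeping in the first step: one must check that $\eta$ can be chosen so that both $L(\eta)\ge 1$ and $L(\eta)\ge \beta/(\rho-\rho_0)$ hold simultaneously, and verify that the blow-up rate of $V_\theta$ near $\s_0$ depends only on the uniform quantities $\theta$, $\{p^i\}$, and the compact set $\s$. Everything else is a direct application of the strong Markov property and the standard non-negative supermartingale inequality.
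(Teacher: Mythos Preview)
Your argument is correct and follows the same overall arc as the paper's: obtain a pure multiplicative contraction $P_T V_\theta \le \rho V_\theta$ on $\s_\eta\setminus\s_0$, then run a stopped supermartingale / Markov-inequality argument to get the geometric tail. The differences are in implementation. For the first step, the paper does not absorb the additive constant $\beta$ from the \emph{statement} of Proposition~\ref{prop:key}; instead it goes back to the \emph{proof} of that proposition, where the computation already yields $P_T V_\theta(z)\le \rho V_\theta(z)$ (no $\beta$) for $z$ sufficiently close to $\s_0$, with the very same $\rho$. Your absorption trick via $V_\theta\ge L(\eta)$ is perfectly valid but forces you to inflate $\rho_0$ to some $\rho\in(\rho_0,1)$, so strictly speaking you prove \eqref{eq:escape} with a slightly worse constant than stated. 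For the second step, the paper packages the supermartingale computation into the single bound $\E_z[\rho^{-\tau}]\le V_\theta(z)$ (quoting \citet[Proposition~8.2]{benaim-18}) and then applies Markov's inequality, which is exactly what your process $N_k=V_\theta(Z_{kT})\rho^{-k}\1_{\{\tau>k\}}$ unwinds by hand; your version has the modest advantage of being self-contained and of making transparent where the normalization $L(\eta)\ge 1$ is used.
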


\emph{\textbf{Biological implication.} Inequality~\eqref{eq:escape} implies that the probability that one or more species remain at low densities  decreases exponentially over time. Furthermore, the larger the minimal community realized per-capita growth rate ($\alpha$ in \eqref{eq:key1}), the greater this rate of exponential decline.}

\begin{proof}
From the proof of Proposition~\ref{prop:key}, we can choose $\eta>0$ sufficiently small so that $P_TV_\theta(z)\le \rho\, V_\theta(z)$ for all $z\in \s_\eta \setminus \s_0.$ \citet[Proposition 8.2]{benaim-18} implies that $\E_z[\rho^{-\tau}]\le V_\theta(z)=\prod_i (x^i)^{-\theta p^i}$ for $z=(x,y)\in \s_\eta\setminus \s_0$. By the Markov inequality, 
\[
\P_z[\tau\le k]=\P_z[\rho^{-\tau}\ge \rho^{-k}]\le \E_z[\rho^{-\tau}]\rho^k \le \prod_i (x^i)^{-\theta p^i} \rho ^k. 
\]  
\end{proof}

\section{Stochastic extinction}\label{sec:extinction}

To truly understand the conditions for stochastic persistence, we also need to understand when the solutions of our model~\eqref{eq:main} converge to the extinction set $\s_0$ with positive probability. To this end, we prove two theorems about extinction. The first theorem is a partial converse to Theorem~\ref{thm:persist}, and shows that when the inequality \eqref{eq:criterion} is reversed for all ergodic measures supported by $\s_0$, population trajectories starting near $\s_0$ are likely to asymptotically approach $\s_0$. To measure the rate of approach to the extinction set $\s_0$, define
\[
\rm{dist}(z,\s_0)=\min_{z'\in \s_0}\|z-z'\|
\]
where $\|\cdot\|$ denotes the standard Euclidean norm. Our second main extinction theorem considers the case when there is a subset of persisting species that can not be invaded by the other species i.e. $r^i(\mu)<0$ for all species not in the community. In this case, we show that when the missing species are sufficiently rare, they are highly likely to go extinct. For both extinction results, we also introduce an accessibility criterion that, when met, implies that all initial conditions lead to asymptotic extinction at an exponential rate with probability one.

\begin{thm}\label{thm:exclusion1}  If there exist positive weights $p^1,p^2,\dots,p^n$ such that
\begin{equation}\label{eq:exclusion1}
\sum_{i=1}^n p^i r^i (\mu) <0
\end{equation}
for all ergodic $\mu$  supported by $\s_0,$ then there exists $a,b>0$ such that
\[
\P_z\left[\limsup_{t\to\infty}\frac{1}{t}
\log\rm{dist}(Z_t,\s_0)<0\right]\ge 1 - a\,\rm{dist}(z,\s_0)^b.
\]
\end{thm}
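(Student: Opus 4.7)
The plan is to mirror the persistence argument of Theorem~\ref{thm:persist}, but with the sign of the Lyapunov test function reversed so that it acts as a supermartingale near $\s_0$. Set $W_\theta(z) = \prod_{i=1}^n (x^i)^{\theta p^i}$ for a small parameter $\theta>0$, which is continuous and bounded on $\s$ and vanishes exactly on $\s_0$. The goal is to produce $\theta>0$, $\rho\in(0,1)$ and $\eta>0$ such that $P_T W_\theta(z) \le \rho W_\theta(z)$ for every $z\in \s_\eta$, and then extract exponential decay via a stopped-supermartingale argument.

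First I would upgrade hypothesis \eqref{eq:exclusion1} to a uniform bound in the spirit of \eqref{eq:key1}. Combining Proposition~\ref{prop:invasion}(ii) with the ergodic decomposition theorem and compactness of the set of invariant measures supported on $\s_0$ (a closed subset of the compact $\s$), I get $T\ge 1$ and $\alpha>0$ with
\[
\sum_{i=1}^n p^i \sum_{\tau=0}^{T-1} \E_z[\log f^i(Z_\tau,\xi_{\tau+1})] \le -\alpha \quad\text{for all }z\in \s_0.
\]
Because $\log f^i$ is bounded and continuous in $z$ (A1, A4) and $(P_\tau h)$ is continuous whenever $h$ is bounded continuous (Feller property), the left-hand side is continuous in $z$, so the inequality persists on some neighborhood $\s_\eta$ with $-\alpha$ replaced by $-\alpha/2$.

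Next I would pass from this additive estimate to the multiplicative bound on $W_\theta$. Writing
\[
\frac{P_T W_\theta(z)}{W_\theta(z)} = \E_z\bigl[\exp(\theta Y)\bigr],\qquad Y:=\sum_{i=1}^n p^i\sum_{\tau=0}^{T-1}\log f^i(Z_\tau,\xi_{\tau+1}),
\]
and using $|Y|\le M$ for a constant $M$ coming from A4 together with the expansion $e^{\theta Y}\le 1+\theta Y+C(M)\theta^2$, the estimate of the previous step gives $\E_z[e^{\theta Y}]\le 1 - \theta\alpha/2 + C\theta^2$ uniformly over $z\in\s_\eta$. For $\theta$ small this is bounded by some $\rho\in(0,1)$, establishing the key supermartingale inequality on $\s_\eta$.

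Finally I would run the standard nonnegative-supermartingale argument. Let $\tau=\inf\{k\ge 0: Z_{kT}\notin \s_\eta\}$; then $M_k := \rho^{-(k\wedge\tau)} W_\theta(Z_{(k\wedge\tau)T})$ is a nonnegative supermartingale with $\E_z[M_0]=W_\theta(z)$. Since $W_\theta\ge \lambda_0:=\eta^{\theta\sum_i p^i}$ on $\s\setminus\s_\eta$, the maximal inequality gives $\P_z[\tau<\infty]\le W_\theta(z)/\lambda_0$. On $\{\tau=\infty\}$, Doob's convergence theorem yields $M_k\to M_\infty<\infty$ a.s., so $W_\theta(Z_{kT})\le M_\infty\, \rho^k$, and the standard comparison $\log \text{dist}(z,\s_0) \asymp \frac{1}{\theta}\log W_\theta(z)$ on a compact set (up to additive constants depending on $\s$, $p^i$, $\theta$) translates this into $\limsup_{t\to\infty}\frac{1}{t}\log\text{dist}(Z_t,\s_0)\le \frac{\log\rho}{T(\theta\,\max_i p^i)}<0$, also using A4 to control the in-between times $t\in[kT,(k+1)T]$. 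Bounding $W_\theta(z)\le a_1\,\text{dist}(z,\s_0)^b$ for a suitable $b>0$ (exponent $b = \theta\min_i p^i$, with constant from the upper bounds on $\s$) then converts $W_\theta(z)/\lambda_0$ into the advertised $a\,\text{dist}(z,\s_0)^b$.

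The main obstacle is the exponential moment estimate in the second paragraph: one must simultaneously choose $\theta$ small enough for the Taylor expansion and $\eta$ small enough that the additive drift is still negative, while keeping the two choices coupled through the uniform bound $M$ from A4. Everything else (the compactness/continuity step, the supermartingale machinery, and the asymptotic conversion to $\text{dist}$) is routine given Proposition~\ref{prop:invasion} and the assumptions \textbf{A1}--\textbf{A4}.
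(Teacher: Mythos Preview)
Your proposal is correct and follows essentially the same route as the paper: reverse the sign of $V$, obtain the additive drift $P_TV-V\le -\alpha$ on a neighborhood $\s_\eta$ via Proposition~\ref{prop:invasion} and compactness, upgrade to the multiplicative inequality $P_TV_\theta\le \rho V_\theta$ by the Taylor estimate of Proposition~\ref{prop:key}, and then run a stopped-supermartingale argument to bound $\P_z[\tau<\infty]$. The only (minor) divergence is in extracting the exponential rate on $\{\tau=\infty\}$: the paper introduces the additive martingale $M_n=\sum_{\ell=1}^n V(Z_{\ell T})-P_TV(Z_{(\ell-1)T})$ and applies the strong law for martingales to conclude $\limsup_n V(Z_{nT})/n\le -\alpha$, whereas you apply Doob's convergence theorem to the nonnegative supermartingale $\rho^{-(k\wedge\tau)}W_\theta(Z_{(k\wedge\tau)T})$ directly; both are standard and yield the same conclusion.
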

The proof of Theorem~\ref{thm:exclusion1} is given in section~\ref{proof:exclusion1} whose strategy is based on the proof of Theorem 3.3 of \citet{benaim-lobry-16} for random switching between two dimensional Lotka-Volterra competition equations. A key ingredient of the proof is to introduce the function $V(z)=\sum_{i=1}^np^i \log x^i$--the negative of the $V$ function used in the proof of Theorem~\ref{thm:persist}. The analog of Proposition~\ref{prop:key} holds for this choice of $V$. However, in this case, the function $V_\theta(z)=\prod_{i=1}^n (x^i)^{p^i}$ tends to decrease after $T$ iterates near the extinction set $\s_0$.

To get asymptotic extinction with probability one for all initial conditions, we say \emph{$\s_0$ is accessible} if for all $\eta>0$, there exists $\gamma>0$ such that
\[
\P_z[ X_t\in \s_\eta \mbox{ for some }t\ge 1]\ge \gamma\mbox{ for all }z\in \s.
\]
The following corollary implies that accessibility plus the conditions of Theorem~\ref{thm:exclusion1} imply almost sure extinction. We prove this corollary in section~\ref{proof:exclusion1}.

\begin{cor}\label{cor1} Assume the assumptions of Theorem~\ref{thm:exclusion1} hold and $\s_0$ is accessible. Then
\[
\P_z\left[\limsup_{t\to\infty}\frac{1}{t}\log\rm{dist}(Z_t,\s_0)<0
\right]=1 \mbox{ for all }z\in \s.
\]
\end{cor}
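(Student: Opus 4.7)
The plan is to combine the quantitative extinction estimate of Theorem~\ref{thm:exclusion1} with accessibility through a tail-event/martingale bootstrap. The pivotal observation is that the event
\[
A := \left\{\limsup_{t\to\infty}\tfrac{1}{t}\log\rm{dist}(Z_t,\s_0)<0\right\}
\]
is shift-invariant, since $(t+n)/t\to 1$ as $t\to\infty$ implies the defining $\limsup$ is unchanged when $(Z_t)_{t\ge 0}$ is replaced by $(Z_{t+n})_{t\ge 0}$. Setting $p(z):=\P_z[A]$ and invoking the Markov property together with this shift-invariance, one obtains $\P_z[A\mid\mathcal{F}_n]=p(Z_n)$ almost surely; thus $p(Z_n)$ is a bounded martingale and L\'evy's forward $0$--$1$ law gives $p(Z_n)\to\1_A$ almost surely. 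Theorem~\ref{thm:exclusion1} provides $p(z)\ge 1-a\,\rm{dist}(z,\s_0)^b$, and since $\sup_{z\in\s_\eta}\rm{dist}(z,\s_0)\to 0$ as $\eta\to 0^+$ by compactness of $\s$, for each $\epsilon\in(0,1)$ one may fix $\eta>0$ so small that $p(z)\ge 1-\epsilon$ whenever $z\in\s_\eta$. (The case $z\in\s_0$ is trivial, as then $\rm{dist}(Z_t,\s_0)=0$ identically.)

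Next I would use accessibility to promote its single-visit lower bound to almost-sure infinite recurrence to $\s_\eta$. Define $D_N:=\{Z_t\notin\s_\eta\ \text{for all}\ t\ge N\}$ and $D:=\bigcup_N D_N$. Accessibility gives $\P_{z'}[D_0]\le 1-\gamma$ for every $z'\in\s$ (trivially if $z'\in\s_\eta$, from the defining inequality otherwise), and the Markov property propagates this to $\P_{z'}[D_M]=\E_{z'}[\P_{Z_M}[D_0]]\le 1-\gamma$ for every $M$, whence $\P_{z'}[D]\le 1-\gamma$ by monotone convergence. Applying the Markov property once more yields
\[
\P_z[D\mid\mathcal{F}_N]=\P_{Z_N}[D]\le 1-\gamma \quad\P_z\text{-a.s.,}
\]
while L\'evy's $0$--$1$ law forces $\P_z[D\mid\mathcal{F}_N]\to\1_D$ almost surely. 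Thus $\1_D\le 1-\gamma<1$ a.s., so $\1_D=0$ a.s., i.e., $Z_n\in\s_\eta$ infinitely often $\P_z$-almost surely.

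Combining the two steps, on a set of full $\P_z$-measure there is a random subsequence $n_k\uparrow\infty$ with $Z_{n_k}\in\s_\eta$, along which $p(Z_{n_k})\ge 1-\epsilon$; hence $\limsup_n p(Z_n)\ge 1-\epsilon$ almost surely. Combined with $p(Z_n)\to\1_A$, this yields $\1_A\ge 1-\epsilon$ a.s., and since $\epsilon\in(0,1)$ is arbitrary and $\1_A\in\{0,1\}$, we obtain $\P_z[A]=1$ for every $z\in\s$, as claimed. The main obstacle is the recurrence step: accessibility by itself yields only a single-visit lower bound $\gamma$, and upgrading this to almost-sure infinite recurrence is precisely what the tail-event/L\'evy argument in the second paragraph accomplishes. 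Once recurrence is in hand, the rest is a clean combination of Theorem~\ref{thm:exclusion1} with bounded-martingale convergence.
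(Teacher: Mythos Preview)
Your proof is correct and rests on the same ingredients as the paper's---Theorem~\ref{thm:exclusion1}, accessibility, the Markov property, and L\'evy's zero--one law---but the organization differs. The paper bypasses your recurrence step entirely: it fixes $\varepsilon$ with $q(\varepsilon)>1/2$, lets $\widetilde\tau$ be the first hitting time of $\s_\varepsilon$, and uses accessibility plus the strong Markov property once to get a \emph{uniform} lower bound $\P_z[\mathcal{E}]\ge \gamma/2>0$ for every $z\in\s$. A single application of L\'evy's zero--one law (to $\E_z[\mathbf{1}_{\mathcal{E}}\mid\mathcal{F}_t]=\P_{Z_t}[\mathcal{E}]\ge\gamma/2$) then forces $\mathbf{1}_{\mathcal{E}}=1$ almost surely. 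Your route instead invokes L\'evy twice---first to upgrade the single-visit bound to almost-sure infinite recurrence to $\s_\eta$, then to pass from $p(Z_n)\to\mathbf{1}_A$ along that recurrent subsequence. Both arguments are sound; the paper's is more economical, while yours makes the recurrence to the boundary neighborhood explicit, which is a statement of independent interest.
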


Our second main extinction result considers the case when there is a set of species that can not be invaded by another other species. Recall that for an ergodic probability measure $\mu$, $S(\mu)\subset\{1,2,\dots,n\}$ is the set of species supported by $\mu.$

\begin{thm}\label{thm:exclusion2} Let $\{1,\dots,n\}=I\cup J$ where $I\cap J=\emptyset$ and $J\neq \emptyset$. Assume
\begin{enumerate}
\item[(i)]\eqref{eq:main} restricted $\s^I:=\{(x,y)|x^j=0$ whenever $j\in J\}$ satisfies that there exists $p^i>0$ for $i\in I$ and $\sum_{i\in I}p^ir^i(\mu)>0$ for ergodic $\mu$ with $\mu(\s^I_0)=1$ where $\s^I_0:=\{z=(x,y)\in \s^I: \prod_{i\in I}x^i=0\}$, and
\item[(ii)] $r^j(\mu)<0$ for any $j\in J$ and ergodic $\mu$ with $S(\mu)=I.$
\end{enumerate}
Then there exist $a,b,c>0$ such that
\begin{equation}\label{eq:exclusion2}
\P_z\left[\limsup_{t\to\infty} \frac{1}{t} \log(X_t^j)<0 \mbox{ for all }j\in J\right]\ge 1 - c \frac{\left(\max_{j\in J} x^j\right)^a}{\left(\min_{i\in I} x^i\right)^b}
\end{equation}
whenever $z=(x,y)$ satisfies $\prod_{i\in I}x^i>0.$
\end{thm}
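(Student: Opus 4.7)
The plan is to imitate the Lyapunov-function strategy of Proposition~\ref{prop:key} and Theorem~\ref{thm:persist}, but now to track simultaneously (a) how close the $I$-community is to its extinction set $\s^I_0$ inside $\s^I$ and (b) how small the $J$-species are. The target bound $c(\max_{j\in J}x^j)^a/(\min_{i\in I}x^i)^b$ suggests working with
\[
\Psi(z)\;=\;V_\theta(z)^{\epsilon}\sum_{j\in J}(x^j)^{a},\qquad V_\theta(z)=\exp\!\Big(-\theta\sum_{i\in I}p^i\log x^i\Big),
\]
showing $P_T\Psi\le \rho\Psi$ on a suitable confining region $R$, and extracting the exit probability by optional stopping.

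The first ingredient comes from hypothesis (i) applied to the dynamics restricted to the invariant set $\s^I$. Theorem~\ref{thm:persist} and the proof of Proposition~\ref{prop:key} for the restricted system yield $\theta>0$, $T_1\ge 1$, $\rho_1\in(0,1)$, and $\beta_1>0$ with $P_{T_1}V_\theta(z)\le\rho_1 V_\theta(z)+\beta_1$ on $\s^I\setminus\s^I_0$. Because $V_\theta$ depends only on the $I$-coordinates and $\E[\prod_i f^i(z,\xi)^{-\theta p^i}]$ is continuous in $z$ (A1 with dominated convergence from A4), this contraction extends, with slightly weaker constants, to a neighborhood $\{z\in\s:\prod_{i\in I}x^i>0,\ \max_{j\in J}x^j\le\delta_1\}$ of $\s^I$ in $\s$. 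The second ingredient comes from hypothesis (ii): by (ii) and Proposition~\ref{prop:invasion}, $r^j(\mu)<0$ for every $j\in J$ and every ergodic $\mu$ on $\s^I$ with $S(\mu)=I$, and the first ingredient confines such $\mu$ to the compact set $\{\int V_\theta\,d\mu\le\beta_1/(1-\rho_1)\}\subset \s^I\setminus\s^I_0$. Continuity of $\mu\mapsto r^j(\mu)$ on this compact set gives a uniform $\alpha>0$ with $r^j(\mu)\le-\alpha$, and the same averaging argument that yields \eqref{eq:key1}, applied to the bounded continuous increments $\log f^j(Z_t,\xi_{t+1})$, produces $T_2\ge 1$ and $\alpha_2>0$ such that
\[
\sum_{t=0}^{T_2-1}\E_z[\log f^j(Z_t,\xi_{t+1})]\;\le\;-\alpha_2\quad\text{for each }j\in J
\]
uniformly on the compact set and, by continuity, on a neighborhood of $\s^I$ in $\s$.

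To combine the ingredients, fix $T$ as a common multiple of $T_1$ and $T_2$ and take $\epsilon,a>0$ small. Using H\"older's inequality under $P_T$ to split the two factors of $\Psi(Z_T)$, the first ingredient iterated controls $\E_z V_\theta(Z_T)^{\epsilon p}$ and the second, together with a Taylor expansion in $a$ justified by A4, controls $\E_z(X_T^j)^{aq}\le (x^j)^{aq}\exp(-aq\alpha_2+O((aq)^2))$; this yields $P_T\Psi(z)\le\rho\,\Psi(z)$ on $R=\{z\in\s:V_\theta(z)\le N,\ \max_{j\in J}x^j\le\delta_0\}$ for some $\rho\in(0,1)$ and suitable $N,\delta_0$. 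Letting $\tau=\inf\{k\ge 0:Z_{kT}\notin R\}$, the sequence $\rho^{-(k\wedge\tau)}\Psi(Z_{(k\wedge\tau)T})$ is a non-negative supermartingale, so optional stopping gives $\E_z[\rho^{-\tau}\Psi(Z_{\tau T})\1_{\tau<\infty}]\le\Psi(z)$. On the exit route $\{\max_{j}x^j>\delta_0\}$ we have $\Psi(Z_{\tau T})\ge\delta_0^a$, and the complementary exit route $\{V_\theta>N\}$ is separately controlled by a Markov inequality applied to the $V_\theta$-supermartingale from the first ingredient; together these give $\P_z[\tau<\infty]\le C\Psi(z)\le C'(\min_{i\in I}x^i)^{-b}(\max_{j\in J}x^j)^a$ after absorbing $\epsilon\theta\sum_{i\in I}p^i$ into $b$. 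On $\{\tau=\infty\}$, the iteration $\E_z\Psi(Z_{kT})\le\rho^k\Psi(z)$ combined with the boundedness of $V_\theta$ inside $R$ forces $\limsup_t t^{-1}\log X_t^j<0$ for every $j\in J$ almost surely, giving the theorem.

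The hard part is the combined contraction in step three: inside $R$ the $V_\theta^\epsilon$-factor is only weakly contracting because $V_\theta$ is already bounded there (so the additive $\beta_1$ is not negligible against $\rho_1 V_\theta$), and consequently almost all of the decay of $\Psi$ must come from the $(x^j)^a$-factor; choosing $\epsilon$, $a$, and the H\"older exponents $p,q$ so that the cross-terms do not spoil this balance is the technical heart of the proof. A secondary obstacle is the extension in the first ingredient from $\s^I$ to a full-space neighborhood, which for the unbounded function $V_\theta$ goes beyond the bare weak-Feller property and requires a direct perturbation estimate using A1 and A4 to compare one-step dynamics near and on $\s^I$.
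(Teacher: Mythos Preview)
Your Lyapunov function $\Psi(z)=V_\theta(z)^\epsilon\sum_{j\in J}(x^j)^a$ is, up to the harmless replacement of $\sum_J$ by $\max_J$, exactly the function the paper uses: the paper sets $V(z)=-\sum_{i\in I}p^i\log x^i+\delta\max_{j\in J}\log x^j$ and works with $V_\theta=e^{\theta V}=\bigl(\prod_{i\in I}(x^i)^{-\theta p^i}\bigr)\max_{j\in J}(x^j)^{\theta\delta}$. The difference is in how the contraction $P_TV_\theta\le\rho V_\theta$ is obtained. You split the product and attack the two factors separately via H\"older; the paper never separates them. Instead, it first chooses $\delta>0$ small enough that
\[
-\sum_{i\in I}p^i r^i(\mu)+\delta\max_{j\in J}r^j(\mu)\le -2\alpha
\]
for \emph{every} ergodic $\mu$ supported on $\s^I$, not merely those with $S(\mu)=I$. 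This works because for $\mu$ with $S(\mu)\subsetneq I$ hypothesis (i) makes the first term strictly negative while $\max_J r^j(\mu)$ is bounded by \textbf{A4}; for $\mu$ with $S(\mu)=I$ the first term vanishes and hypothesis (ii) makes the second term negative. With this single inequality in hand, the same contradiction/compactness argument that proves \eqref{eq:key1} yields $P_TV(z)-V(z)\le-\alpha$ on all of $K_\eta=\{\max_{j\in J}x^j\le\eta\}$, and exponentiation gives $P_TV_\theta\le\rho V_\theta$ there. The confining region is then the single sublevel set $\{V_\theta\le\tilde\eta\}\subset K_\eta$, with a single exit route, and optional stopping delivers the bound directly.

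Your route, by contrast, forces you to carve out the region $R=\{V_\theta\le N,\ \max_J x^j\le\delta_0\}$ because your second ingredient (negative drift of $\log X^j$) is only established away from $\s^I_0$; this is the source of the two-exit-route problem you flag. The control of the exit through $\{V_\theta>N\}$ is where the gap lies: $V_\theta$ satisfies $P_TV_\theta\le\rho_1 V_\theta+\beta_1$, not a supermartingale inequality, so a bare ``Markov inequality applied to the $V_\theta$-supermartingale'' does not bound $\P_z[\sup_k V_\theta(Z_{kT})>N]$ by something of the required form $c(\max_J x^j)^a/(\min_I x^i)^b$. One can attempt a maximal inequality for the Foster--Lyapunov drift, but the resulting bound depends on $V_\theta(z)$ and the constant $\beta_1/(1-\rho_1)$, not on $\max_J x^j$; tying the two exit bounds together into the stated conclusion is exactly the ``technical heart'' you leave unresolved. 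The paper's device of absorbing both hypotheses into one additive $V$ with small $\delta$ eliminates this difficulty entirely, and you should adopt it.
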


The lower bound in equation~\eqref{eq:exclusion2} implies that for a given set of positive densities for the species in subcommunity $I$ (i.e. $x^i>0$ for $i\in I$), then the probability of asymptotic extinction of an unsupported species increases to one as their densities get sufficiently small. The proof of Theorem~\ref{thm:exclusion2} follows the proof strategy used for unstructured stochastic differential equations by \citet{hening-nguyen-18}. For this proof, the key ingredient is the function $V(z)=-\sum_{i\in I}p^i \log x^i + \delta \max_{j\in J} \log x^j$ for a sufficiently small $\delta>0.$ The corresponding $V_\theta(z)=\exp(\theta V(z))$ for $\theta>0$ sufficiently small tends to be increasing at low densities of  species from $I$ and decreasing at low densities of species from $J.$

To get extinction with probability one, we need an accessibility assumption. To allow for  multiple, uninvasible subcommunities, we consider a finite collection of proper subsets of species, $I_1,\dots,I_m\subset \{1,\dots,n\}$. For each $1\le i\le m$ and $\eta>0$, define $\s^{I_i}_\eta=\{(x,y)\in\s:x^j\le \eta \mbox{ for all }j\in I_i\}$. We say \emph{$\cup_i\s_{I_i}$ is accessible} if for all $\eta>0$, there exists $\gamma>0$ such that
\[
\P_z[ Z_t\in\cup_i \s^{I_i}_\eta \mbox{ for some }t\ge 1]\ge \gamma
\]
whenever $z=(x,y)$ satisfies $\prod_{i}x^i>0.$

\begin{cor}\label{cor2} Assume the assumptions of Theorem~\ref{thm:exclusion2} hold for each $I_i$, $J_i=\{1,\dots,n\}\setminus I_i$ and $\cup_i\s^{I_i}$ is accessible. Then
\[
\P_z\left[\limsup_{t\to\infty} \frac{1}{t}\log\rm{dist}(Z_t,\s_0)<0\right]=1 \mbox{ for all }z\in\s.
\]
\end{cor}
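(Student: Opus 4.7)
The plan is to mimic the proof of Corollary~\ref{cor1}, combining the quantitative extinction bound of Theorem~\ref{thm:exclusion2} with the accessibility hypothesis via the strong Markov property and a conditional Borel--Cantelli argument. For $z\in\s_0$ the statement is trivial: since $f^i>0$ and $X^i_{t+1}=X^i_t f^i(\cdots)$, the set $\s_0$ is forward invariant, so $\dist(Z_t,\s_0)\equiv 0$. Assume henceforth $z\in\s\setminus\s_0$, and set $E=\{\limsup_{t\to\infty} t^{-1}\log\dist(Z_t,\s_0)<0\}$, which is a tail event for $\{Z_t\}$. Because $\dist(Z_t,\s_0)$ is dominated, up to a constant depending only on the geometry of $\s$, by $\min_i X^i_t\le\max_{j\in J_i} X^j_t$, the event appearing in Theorem~\ref{thm:exclusion2} is contained in $E$, so that theorem provides constants $a_i,b_i,c_i>0$ with
\[
\P_w[E] \;\ge\; 1 \,-\, c_i\,\frac{(\max_{j\in J_i}u^j)^{a_i}}{(\min_{k\in I_i}u^k)^{b_i}}
\]
whenever $w=(u,v)$ satisfies $\prod_{k\in I_i}u^k>0$.

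I aim to show the chain almost surely visits, infinitely often, states where the right-hand side above tends to $1$. Fix a $\delta>0$ (to be chosen later) and a sequence $\eta_k\downarrow 0$, and put
\[
G(\eta,\delta) \;=\; \bigcup_{i=1}^m \bigl\{(x,y)\in\s:\;\max_{j\in J_i}x^j\le\eta\text{ and }\min_{k\in I_i}x^k\ge\delta\bigr\},
\]
with stopping times $\sigma_0=0$ and $\sigma_k=\inf\{t>\sigma_{k-1}:Z_t\in G(\eta_k,\delta)\}$. If each $\sigma_k$ is almost surely finite, then the strong Markov property applied at $Z_{\sigma_k}\in G(\eta_k,\delta)$ yields
\[
\P[E^c\mid\F_{\sigma_k}] \;\le\; C\,\eta_k^{\,a}\,\delta^{-b} \;\to\; 0
\]
for worst-case constants $a,b,C$ over $i=1,\dots,m$. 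Since $E^c$ is a tail event, L\'evy's upward $0$--$1$ law gives $\P[E^c\mid\F_{\sigma_k}]\to\mathbf{1}_{E^c}$ almost surely, forcing $\P_z[E]=1$.

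The main obstacle is verifying $\sigma_k<\infty$ almost surely: accessibility of $\cup_i\s^{I_i}$ delivers visits to $\cup_i\s^{I_i}_{\eta_k}$, giving the upper bound on $\max_{j\in J_i}X^j$, but does not directly produce the matching lower bound $\min_{k\in I_i}X^k\ge\delta$. To secure the lower bound I would exploit assumption (i) of Theorem~\ref{thm:exclusion2}: the dynamics restricted to each $\s^{I_i}$ is stochastically persistent, so Theorem~\ref{thm:persist} applied to that restriction shows that $\{\min_{k\in I_i}x^k\ge 2\delta\}\cap\s^{I_i}$ is visited a positive asymptotic fraction of the time (for some $\delta>0$) by the restricted chain. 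Continuity of $f^i$ and $G$ (A1--A2) together with compactness (A3) then imply that, once the full chain enters $\s^{I_i}_\eta$ for $\eta$ small enough, its $I_i$-coordinates track those of the restricted chain over finite time windows, so that a positive proportion of visits to $\cup_i\s^{I_i}_{\eta_k}$ also lands in $G(\eta_k,\delta)$. Iterating this with the accessibility hypothesis via the strong Markov property then produces $\sigma_k<\infty$ a.s., and the Borel--Cantelli step above completes the argument.
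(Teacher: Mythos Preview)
Your overall plan---strong Markov property, Theorem~\ref{thm:exclusion2}'s quantitative bound, and L\'evy's zero--one law---is exactly the strategy the paper indicates (``same as Corollary~\ref{cor1}''). You also put your finger on a genuine difficulty that the paper's one-line remark does not address: the accessibility hypothesis only delivers states with $\max_{j\in J_i}x^j\le\eta$, whereas the bound \eqref{eq:exclusion2} is useless unless $\min_{k\in I_i}x^k$ is simultaneously bounded away from zero. So the direct transcription of the Corollary~\ref{cor1} argument does not close, and your instinct to bring in assumption~(i) of Theorem~\ref{thm:exclusion2} is well placed.

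The gap is in your proposed fix. The ``tracking'' step asserts that, once the full chain enters $\s^{I_i}_\eta$, the $I_i$--coordinates shadow the restricted chain well enough that the persistence of the restricted dynamics lifts to a positive, \emph{uniform} chance of reaching $G(\eta_k,\delta)$. Two issues obstruct this. First, Theorem~\ref{thm:persist} (and Proposition~3.2) for the restricted chain are asymptotic or state--dependent: the time to reach $\{\min_{k\in I_i}x^k\ge 2\delta\}$ from $w\in\s^{I_i}$ grows without bound as $\min_{k\in I_i}w^k\downarrow 0$, so there is no single finite window $N$ and probability $p>0$ that works uniformly over all entry points. Second, nothing in the accessibility hypothesis prevents the chain from entering $\s^{I_i}_\eta$ at states where $\min_{k\in I_i}x^k$ is itself tiny; at such states the shadowing window must be long, and over that window the $J_i$--species can grow by a factor $e^{LN}$ (by \textbf{A4}), spoiling the numerator bound you need. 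Consequently the conditional probabilities $\P[\,\cdot\mid\F_{\sigma_k}]$ in your Borel--Cantelli step are not bounded below uniformly, and the conclusion $\sigma_k<\infty$ a.s.\ is unsupported.

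A more promising route is to avoid the restricted chain altogether and work with the Lyapunov function $V_\theta^{(i)}$ from the proof of Theorem~\ref{thm:exclusion2} on the \emph{full} chain: the contraction $P_{T_i}V_\theta^{(i)}\le\rho_i V_\theta^{(i)}$ already holds throughout the neighborhood $K_{\eta_i}^{(i)}=\s^{I_i}_{\eta_i}$, so as long as the chain remains there the supermartingale drives $V_\theta^{(i)}$ toward zero---simultaneously pushing the $I_i$--species up and the $J_i$--species down---until the sublevel set $\{V_\theta^{(i)}\le\tilde\eta_i/2\}$ is reached, after which \eqref{eq:exclusion2} applies with a fixed lower bound. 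What remains is to control the exit probability from $K_{\eta_i}^{(i)}$ before that sublevel set is hit; this still requires care, but it keeps the argument on the full chain and exploits directly the drift condition that assumption~(i) was used to produce.
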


The proof of this corollary uses the same strategy as the proof of Corollary~\ref{cor1} in section~\ref{proof:exclusion1}. The details are left to the reader.

\section{Applications}\label{sec:apps}

To illustrate the utility of our theorems, we apply them to four types of models. Our first set of models considers a stochastic version of two and three strategy evolutionary games. It highlights how Theorem~\ref{thm:persist} and Theorems~\ref{thm:exclusion1}--\ref{thm:exclusion2} can be used to fully characterize the possible dynamics of coexistence and extinction in these evolutionary games. The second set of models are stochastic counterparts to the Lotka-Volterra differences equations~\citep{hofbauer-etal-87}. These models exhibit an averaging property that allows one to readily compute the realized per-capita growth rates $r^i(\mu)$ with respect to any ergodic measure. We illustrate the use of this averaging property with a predator-prey model. Our last two models are single species models with different types of auxiliary variables. The first of these models considers a species evolving in a fluctuating environment. This example illustrates how, simultaneously, an evolving trait can be incorporated as an internal variable and how auto-correlated fluctuations can incorporated as an external variable.  The second of these two  models explores the dynamics of  disease spread in a  spatially structured population. This examples  illustrates how discrete population structure can be represented as internal variables. 

\subsection{Stochastic evolutionary games}\label{sec:eg} Evolutionary game theory is an important framework for studying frequency-dependent population dynamics~\citep{maynardsmith-82,hofbauer-sigmund-98,hofbauer-sigmund-03}. Remarkably, the three basic games (the Prisoner's Dilemma, the Hawk-Dove game, and the Rock-Paper-Scissor game) have provided fundamental insights about the evolution of cooperation \citep{axelrod-84,axelrod-hamilton-81}, animal contests \citep{maynardsmith-82,maynardsmith-price-73}, and Red Queen  dynamics \citep{sinervo-lively-96,kerr-etal-02,kirkup-riley-04,nahum-etal-11}. An important, yet often under-appreciated, consideration in these games is the effect of environmental stochasticity on the maintenance or loss of  polymorphisms i.e. coexistence of multiple strategies. Here, we show how our results characterize (generically) persistence in probability and extinction for stochastic versions of these classic games in correlated, as well as, uncorrelated environments.

 For the stochastic evolutionary games, there are $n$ competing asexual genotypes or strategies. The frequency of strategy or genotype $i$ is $X^i$. Individuals interact randomly resulting in fitness payoffs that modify the basal fitness of each strategy. Let $b^i(Y)$ be the basal fitness (possibly zero) of strategy $i$ and $A^{ij}(Y)$ be the fitness payoff (possibly negative) to an individual of strategy $i$ following an interaction with an individual of strategy $j$. Here $Y$ is an environmental variable that determines the fitness payoffs in any time step. If $Y$ is given by a first-order, multivariate auto-regressive process, then the evolutionary game dynamics are
\begin{equation}\label{eq:lottery}
\begin{aligned}
X^i_{t+1}=&X_t^i\frac{\sum_jA^{ij}(Y_{t})X_t^j+b^i(Y_{t})}{\sum_{j\ell}A^{j\ell}(Y_{t})X^j_tX^\ell_t+\sum_j b^j(Y_{t})X^j_t}\\
Y_{t+1}=&C Y_t +\xi_{t+1}
\end{aligned}
\end{equation}
where $C$ is a matrix with a spectral radius strictly less than one, and $\xi_1,\xi_2,\dots$ is a sequence of i.i.d random variables taking value in a compact subset of $\R^n.$ Under these assumptions on the auxiliary variable dynamics, they enter and remain in a compact set $K\subset \R^k$ with probability one. Hence, the state space is given by $\s=\Delta\times K$ where $\Delta=\{(x^1,\dots,x^n)\in [0,1]: \sum_i x^i=1\}$ is the probability simplex.

We illustrate our results with two important evolutionary games: the hawk-dove game and the rock-paper-scissors game.

\subsubsection{Hawk-dove game.} For $n=2$ strategy games, $\s_0=\{(1,0),(0,1)\}\times K$ which only supports two ergodic measures $\mu^1=\delta_{(1,0)}\otimes \nu$ and $\mu^2=\delta_{(0,1)}\otimes \nu$ where $\otimes$ denotes the product of two measures, and $\nu$ is the law of the stationary distribution $\widehat Y$ of the auto-regressive process $Y_t$. It follows that \[
r^j(\mu^i)=\E\left[\log\frac{A^{ji}(\widehat Y)+b^j(\widehat Y)}{A^{ii}(\widehat Y)+b^i(\widehat Y)}\right].
\]
If both strategies can invade each other ($r^j(\mu^i)>0$ for $i\neq j$), Theorem~\ref{thm:persist} implies stochastic persistence. Alternatively, if $r^j(\mu^i)<0$ for some $i\neq j$, then Theorem~\ref{thm:exclusion2} implies strategy $j$ is excluded with high probability when its initial frequency is sufficiently low.

For example, consider the hawk-dove game in which hawks (strategy $1$) and doves (strategy $2$) engage in pairwise contest over a resource worth $V(Y)$ offspring. Interacting doves, split the resource, while hawks get all of the resource when encountering a dove. Encounters between hawks result in one getting all of the resource and the other paying a cost $C$. Assume that all individuals have a base pay-off of $b(Y)\equiv 1$ and that $V(Y)-C>-2$ to ensure that payoffs are always positive. Then
\[
A(Y)=1+\begin{pmatrix} \frac{V(Y)-C}{2}& V(Y)\\
0& V(Y)/2\end{pmatrix}.
\]
Hawks always invade a population of doves as \[r^1(\mu^2)=\E\left[\log\left(\frac{1 + V(\hat Y)}{1+V(\hat Y)/2)}\right)\right]>0.\] Doves can invade a population of hawks only if \[
r^2(\mu^1)=\E\left[\log \frac{1}{(V(\hat Y)-C)/2+1}\right]>0.
\]
Provided $\rm{Var}[V(\widehat Y)]>0$, Jensen's inequality implies
\[
\E\left[\log \frac{1}{(V(\hat Y)-C)/2+1}\right]
>\log \frac{1}{(\E[V(\hat Y)]-C)/2+1}.
\]\emph{\textbf{Biological implication.} Environmental fluctuations in the payoff $V(Y)$ can facilitate the coexistence of hawk and dove strategies. Importantly, unlike the deterministic hawk-dove game, hawks and doves can coexist even if the mean reward for winning a contest is greater than the cost paid by a hawk losing a battle to another hawk .}

\subsubsection{Rock-paper-scissor dynamics} To illustrate the use of Theorem~\ref{thm:exclusion1}, we consider the rock-paper-scissor game. For mathematical convenience, we number the strategies $0$ (rock), $1$ (paper), and $2$ (scissors). When strategy $i$ interacts with strategy $i+1$ (mod $3$), it receives a negative payoff $-L_t^i<0$ while the other strategy receives a positive payoff $W_t^{i+1}$. Let $Y_t=(W^0_t,W_t^1,W_t^2,L_t^0,L_t^1,L_t^2)$. Then the payoff matrix equals
\[
A(Y_t)=\begin{pmatrix}
0&-L_t^1&W_t^1\\
W_t^2&0&-L_t^2\\
-L_t^3&W_t^3&0\\
\end{pmatrix}.
\]
Assume a base payoff $b(Y)\equiv 1$ of one and $L_t^i<1$. Furthermore, for simplicity, assume $L^0_t,L^1_t,L^2_t$ are identically distributed and $W_t^0,W_t^1,W_t^2$ are identically distributed. 

\begin{figure}
\includegraphics[width=0.6\textwidth]{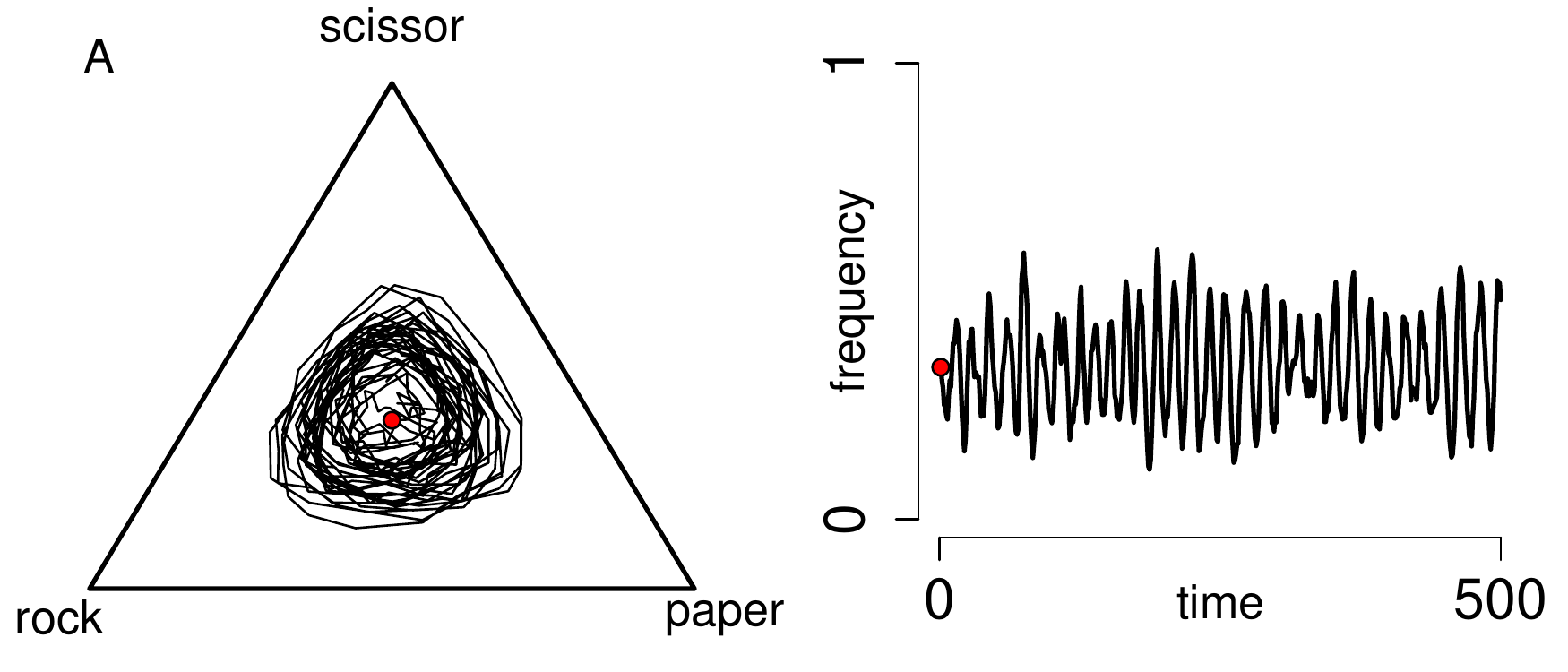}\\
\includegraphics[width=0.6\textwidth]{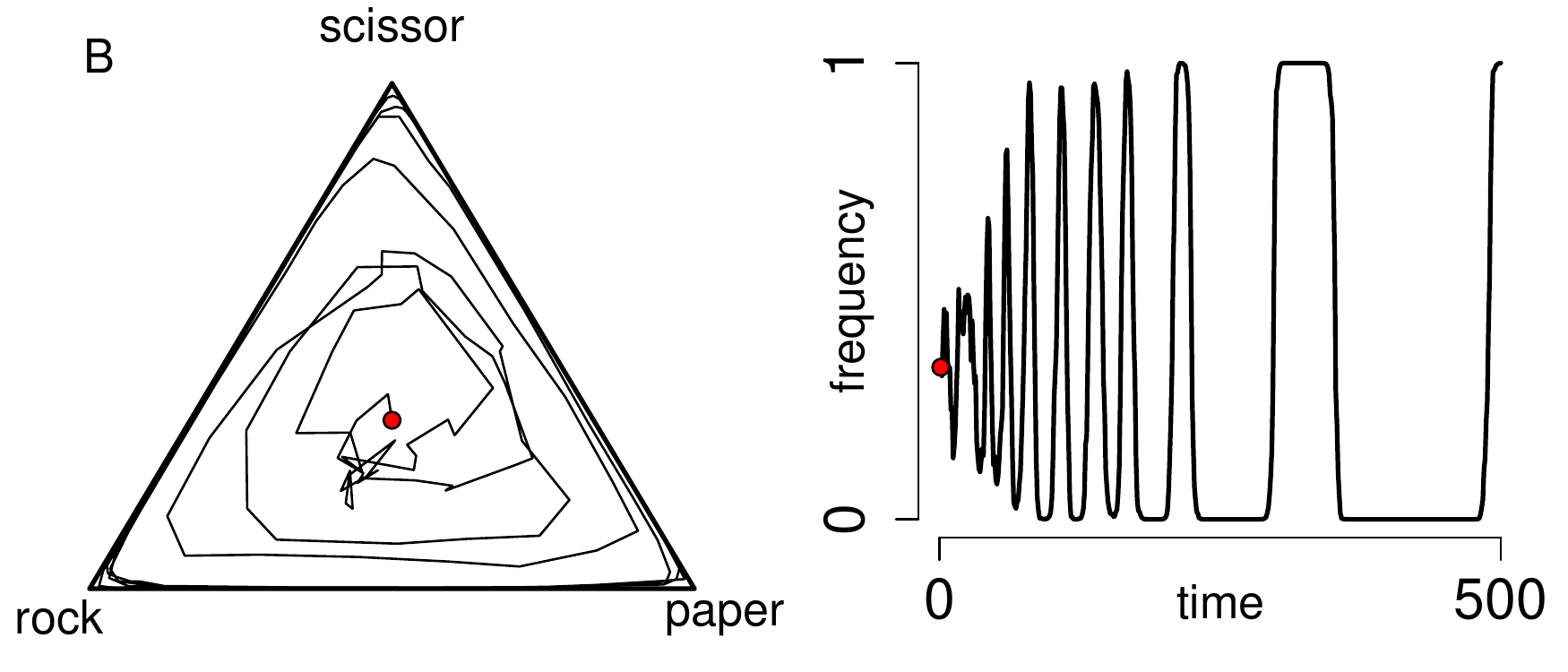}
\caption{Noise induced exclusion in the rock-paper-scissor dynamic. In both panels,  payoffs fluctuate around around the mean values $\E[\widehat W_t^i]=2.3$ and $\E[\widehat L_t^i]=0.5$. In (A), payoffs are equally likely to be $\pm 40\%$ of the mean payoff. In (B), payoffs are equally likely to be $\pm 90\%$ of the mean payoff. }\label{fig:rps}
\end{figure}

As $1-L_t^i<1+W_t^{i+1}$ implies that for any pair of strategies $i$ and $i+1$ (mod $3$), strategy $i$ gets excluded by strategy $i+1$ (mod $3$) i.e. $\lim_{t\to\infty} X^{i+1}_t=1$ with probability one whenever $X_0^i+X_0^{i+1}=1$ and $X_0^{i+1}>0$. Hence, the only invariant measures on the boundary $\s_0$ are given by the pure strategy invariant measures $\delta_i\otimes \nu$ where $\delta_i$ is a Dirac measure only supporting strategy $i$ and $\nu$ is the law of the stationary distribution $\widehat Y$ of the autoregressive process. The realized per-capita growth rates at these invariant measures are given by
\[
r^{i-1 \mbox{\tiny (mod 3)}}(\delta_i\otimes \nu)=\E[\log (1-\widehat L^1)],r^i(\delta_i\otimes \nu)=0,\mbox{ and }r^{i+1 \mbox{\tiny (mod 3)}}(\delta_i\otimes \nu)=\E[\log(1+\widehat W^1)].
\]
A simple inequality calculation reveals that there are weights $p^i$ such that $\sum_i p^i r^i(\delta_j\otimes \nu)>0$ for all $j$
if and only if
\begin{equation}\label{eq:rps1}
\E[\log(1+W_t^1)]+\E[\log(1-L_t^1)]>0.
\end{equation}
When this inequality is meet, Theorem~\ref{thm:persist} implies all three strategies stochastically persist. Alternatively, there are weights $p^i$ such that $\sum_i p^i r^i(\delta_j\otimes \nu)<0$ for all $j$
if and only if
\begin{equation}\label{eq:rps2}
\E[\log(1+W_t^1)]+\E[\log(1-L_t^1)]<0.
\end{equation}
When this inequality is meet, Theorem~\ref{thm:exclusion1} implies that $Z_t$ asymptotically approaches $\s_0$ with high probabilities whenever one of the strategies is at low frequency. Hence, the conditions for robust stochastic persistence~\eqref{eq:rps1} and exclusion~\eqref{eq:rps2} are sharp except for the degenerate case in which the left hand side of \eqref{eq:rps1} equals zero.

So what effect does environmental stochasticity have on maintaining a polymorphism? When $\rm{Var}[W^1_t]+\rm{Var}[L^1_t]>0$, Jensen's inequality implies
\[
\E[\log(1+W_t^1)]+\E[\log(1-L_t^1)]<\log(1+\E[W_t^1])+\log(1-\E[L_t^1]).
\]
\emph{\textbf{Biological implication.} Environmental fluctuations in the payoffs inhibit coexistence of the rock, paper, scissors strategies. In particular, unlike the deterministic games, even if the average winning payoff is greater than the average losing payoff, the three strategies may not coexist} (Fig.~\ref{fig:rps}).

\subsection{Stochastic Lotka-Volterra difference equations}

\citet{hofbauer-etal-87} introduced a discrete-time counterpart to the Lotka-Volterra differential equations. These equations consider $n$ interacting species whose dynamics in the absence of environmental stochasticity are
\begin{equation}\label{eq:LVdet}
X^i_{t+1}=X^i_{t+1} \exp\left(\sum_j a^{ij} X^i_t +b^i\right) \quad i=1,2,\dots,n
\end{equation}
where $a^{ij}$ describes the per-capita effect of species $j$ on species $i$'s per-capita growth rate and $b^i$ corresponds to the intrinsic rates of growth of species $i$. \citet{hofbauer-etal-87} showed that these difference equation share many mathematical properties of the Lotka-Volterra differential equations. Most importantly, as we describe below for their stochastic counterpart, they exhibit an averaging property that allows one to verify permanence or exclusion by solving linear inequalities. 

Here, we consider a stochastic counterpart to these equations by allowing the interaction strengths $a^{ij}$ and intrinsic per-capita growth rates $b^i$ to vary stochastically. Specifically, we assume that for each $i,j$ the sequence $a^{ij}_1,a^{ij}_2,a^{ij}_3,\dots$ are i.i.d. and for each $i$ the sequence $b_1^i,b_2^i,b_3^i,\dots$ are i.i.d. Then the dynamics become
\begin{equation}\label{LV}
X^i_{t+1}=X^i_{t+1} \exp\left( \sum_{j}a_{t+1}^{ij}X_t^j +b_{t+1}^i\right)\quad i=1,2,\dots,n.
\end{equation}
This system of stochastic difference equations generalizes the one introduced by \citet{jmb-11} who only considered fluctuations in the $b_t^i$. Then $\xi_t=(a^{ij}_t,b^i_t)_{1\le i,j\le j}$ for these models. 

To ensure the dynamics of \eqref{LV} remain in a compact set, we follow \citet{hofbauer-etal-87} and define the interaction strengths $a_t^{ij}$ to be \emph{hierarchically ordered} if there exists a reordering of the indices such that $a_t^{ii}<0$ for all $i$ and $t\ge 0$, and $a^{ij}_t\le 0$ whenever $i\le j$ and $t\ge 0$. While this assumption excludes 	 mutualistic interactions, it allows for many types of predator-prey or competitive interaction. The following lemma shows that hierarchically ordered systems with bounded interaction strengths and intrinsic per-capita growth rates exhibit bounded dynamics. 

\begin{lem}\label{lem:bounded} If stochastic Lotka-Volterra model \eqref{LV} is hierarchically ordered and there exist $\alpha,\beta>0$ such that $a^{ii}_t\le -\alpha$ and $|a^{ij}_t|,|b_t^i| \le \beta$ for all $i,j$ and $t\ge 0$, then there exists $K>0$ such that $X_t\in [0,K]^n$ for $t\ge n+1$. \end{lem}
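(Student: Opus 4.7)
The plan is to bound the species one index at a time in a cascade, exploiting the hierarchical structure. The key observation is that for species $1$, every interaction term is non-positive: $a^{1j}_{t+1} X^j_t \le 0$ for $j=1$ (since $a^{11}_{t+1}\le -\alpha$) and for $j\ge 2$ (since $a^{1j}_{t+1}\le 0$ by the hierarchical assumption, with $X^j_t\ge 0$). Combined with $b^1_{t+1}\le \beta$, this yields
\[
X^1_{t+1}\le X^1_t\exp\bigl(-\alpha X^1_t+\beta\bigr).
\]
The function $x\mapsto x\exp(-\alpha x+\beta)$ is maximized on $[0,\infty)$ at $x=1/\alpha$ with value $K_1:=e^{\beta-1}/\alpha$, so $X^1_t\le K_1$ for every $t\ge 1$, independently of the initial condition.

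Next I would proceed by induction on the species index. Assume I have already shown $X^j_t\le K_j$ for all $t\ge j$ and all $j<i$. For species $i$, I split the interaction sum: for indices $j\ge i$ the coefficient $a^{ij}_{t+1}\le 0$ (with the diagonal contributing at most $-\alpha X^i_t$), while for $j<i$ the term $a^{ij}_{t+1}X^j_t$ is bounded above by $\beta K_j$ whenever $t\ge j$, and in particular whenever $t\ge i-1$. Using $b^i_{t+1}\le \beta$ as well, we obtain for $t\ge i-1$
\[
X^i_{t+1}\le X^i_t\exp\!\left(-\alpha X^i_t+\beta\Bigl(1+\sum_{j=1}^{i-1}K_j\Bigr)\right),
\]
and the same Ricker-type bound gives $X^i_t\le K_i$ for all $t\ge i$, where $K_i:=\tfrac{1}{\alpha}\exp\!\bigl(\beta(1+\sum_{j<i}K_j)-1\bigr)$.

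Setting $K:=\max_{1\le i\le n}K_i$ then yields $X_t\in[0,K]^n$ for all $t\ge n$, and a fortiori for $t\ge n+1$ as stated. The argument is fundamentally a cascade of one-dimensional Ricker-type bounds; the only point that requires any care is the bookkeeping, since the bound $X^j_t\le K_j$ is only guaranteed once $t\ge j$, so it takes $n$ time steps for the cascade to reach species $n$. The hierarchical ordering is used precisely to ensure that, for each $i$, the only species that can inflate the growth factor of species $i$ are those with strictly smaller index, which by the inductive hypothesis are already confined to the compact interval $[0,K_j]$.
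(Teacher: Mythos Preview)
Your proof is correct and follows essentially the same approach as the paper: an inductive cascade of one-dimensional Ricker-type bounds exploiting the hierarchical ordering, with the same constants $K_i=\alpha^{-1}\exp\bigl(\beta(1+\sum_{j<i}K_j)-1\bigr)$. The only difference is a harmless off-by-one in the bookkeeping---you obtain $X^i_t\le K_i$ for $t\ge i$ whereas the paper records $t\ge i+1$---so your version is in fact slightly sharper and already yields the conclusion for $t\ge n$.
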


\begin{proof} Following \citet{hofbauer-etal-87} observe that
\[
X_{t+1}^1 \le X_t^1 \exp(-\alpha X_t^1+\beta)
\mbox{ for all }t \]
as $a^{1j}_t\le 0$ for all $j\ge 2$ and $t\ge 0$. Hence, $X_t^1\in [0,K_1]$ for  $t\ge 2$ where $K_1=\exp(\beta-1)/\alpha$.

Assume that there exist $K_1,\dots, K_{j-1}>0$ such that $X_t^i \in [0,K_i]$ for $i\le j-1$ and $t\ge i+1$. We will show that there exists $K_{j}$ such that $X_t^{j}\in [0,K_{j}]$ for $t\ge j+1$. Indeed, by the hierarchically ordered assumption and our inductive assumption,
\[
X_{t+1}^{j}\le X_t^{j}\exp(-\alpha X^{j}_t+\beta+\sum_{i<j}\beta K_i)
\]
for $t\ge j$. Hence, $X_t^j \le K_j$ for $t\ge j+1$ where
 $K_j=\exp(\beta+\beta\sum_{i<j}  K_i-1)/\alpha$. Defining $K=\max K_j$ completes the proof.
\end{proof}

The following lemma shows that verifying the conditions for Theorems~\ref{thm:persist},\ref{thm:exclusion1}, and \ref{thm:exclusion2} reduces to a linear algebra problem. In particular, this lemma implies that many of the permanence and exclusion criteria developed by \citet{hofbauer-etal-87,hofbauer-sigmund-98} for hierarchal systems extends to the stochastic Lotka-Volterra difference equations by replacing the deterministic terms $a^{ij}$ and $b^i$ in \eqref{eq:LVdet} with the expectations $\E[a^{ij}_t]$ and $\E[b^i_t]$ in \eqref{LV}. 

\begin{lem}  Let $\mu$ be an ergodic measure for \eqref{LV}.   If there exists a unique non-negative solution $\hat x$ to
\[
\sum_j   \E[{a}^{ij}_t]\hat x^j+\E[b^i_t]=0 \mbox{ for }i\in S(\mu)\mbox{ and }
\hat x^i=0 \mbox{ for } i\notin S(\mu)
\]
 then
\[
r^i(\mu)=\left\{\begin{array}{cc} 0& \mbox{ if }i\in S(\mu)\\ \sum_j\E[a^{ij}_t]  \hat x^j +\E[b_t^i] & \mbox{otherwise.}
\end{array}\right.
\]
\end{lem}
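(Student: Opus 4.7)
The plan is to insert the explicit Lotka--Volterra log-fitness
\[
\log f^i(x,\xi_t) \;=\; \sum_j a^{ij}_t\,x^j + b^i_t
\]
into the definition \eqref{eq:per-capita} of $r^i(\mu)$ and exploit the fact that this is affine in $x$ with random coefficients $a^{ij}_t,b^i_t$ independent of the integration variable. Because all coefficients are uniformly bounded and the state space is compact by Lemma~\ref{lem:bounded}, Fubini's theorem is legal and produces the ``master identity''
\[
r^i(\mu) \;=\; \sum_j \E[a^{ij}_t]\,\bar x^j + \E[b^i_t],
\qquad \bar x^j := \int x^j\,\mu(dz),
\]
valid for every $i$ and every invariant $\mu$. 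The lemma then reduces to identifying the moment vector $\bar x$.

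For $i\in S(\mu)$, Proposition~\ref{prop:invasion}(iii) gives $r^i(\mu)=0$ directly, matching the first branch of the claim. For the other branch, I would argue that $\bar x$ coincides with $\hat x$. The easy half is that for $j\notin S(\mu)$, the definition of the species support forces $\mu(\{x:x^j>0\})=0$, whence $\bar x^j = 0 = \hat x^j$. The other half uses the vanishing of $r^i(\mu)$ on $S(\mu)$: substituting $r^i(\mu)=0$ into the master identity shows that $\bar x$ satisfies
\[
\sum_j \E[a^{ij}_t]\,\bar x^j + \E[b^i_t] = 0 \quad (i\in S(\mu)),
\]
alongside $\bar x^j=0$ for $j\notin S(\mu)$. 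That is precisely the linear system defining $\hat x$; since $\mu$ is supported in $\R^n_+$ we have $\bar x\ge 0$, and the uniqueness hypothesis forces $\bar x=\hat x$.

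Substituting $\bar x=\hat x$ back into the master identity for $i\notin S(\mu)$ yields the stated formula $r^i(\mu)=\sum_j \E[a^{ij}_t]\hat x^j+\E[b^i_t]$, which closes the argument. The only non-routine ingredient is the uniqueness hypothesis itself: without it, the linear system above can admit a continuum of nonnegative solutions (e.g. whenever the square submatrix $(\E[a^{ij}_t])_{i,j\in S(\mu)}$ is singular), and $r^i(\mu)$ would not be pinned down by the ergodic measure's support alone. Everything else is linear algebra combined with Proposition~\ref{prop:invasion} and the compactness provided by Lemma~\ref{lem:bounded}.
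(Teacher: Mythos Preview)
Your proposal is correct and follows essentially the same route as the paper: express $r^i(\mu)$ via \eqref{eq:per-capita} as $\sum_j \E[a^{ij}_t]\bar x^j+\E[b^i_t]$ with $\bar x^j=\int x^j\,\mu(dx)$, invoke Proposition~\ref{prop:invasion}(iii) to get $r^i(\mu)=0$ for $i\in S(\mu)$, and then use the uniqueness hypothesis to identify $\bar x=\hat x$. You are slightly more explicit than the paper in separating out the observation $\bar x^j=0$ for $j\notin S(\mu)$ and in justifying the interchange of expectation and integral, but the argument is the same.
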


\begin{proof}
Let $\mu$ be an ergodic measure. Assertion $(iii)$ of Proposition~\ref{prop:invasion}  implies that
\[
0=r^i(\mu)= \sum_j \E[a^{ij}_t] \int x^j\,\mu(dx)+\E[b^i_t]
\]
for all $i\in S(\mu)$. Since we have assumed there is a unique non-negative solution $\hat x^i$ to this system of linear equations with the additional condition $\hat x^i=0$ for $i\notin S(\mu)$, it follows that $\int x^i \mu(dx)=\hat x^i$ for all $i$. Thus, for any $i\notin S(\mu)$,
\[
r^i(\mu)=\sum_j \E[a^{ij}_t] \int x^j\,\mu(dx)+\E[b^i_t]=\sum_j \E[a^{ij}_t] \hat x^j+\E[b^i_t]
\]
as claimed.
 \end{proof}

To illustrate the use of the lemma, lets consider a predator-prey system with a fluctuating carrying capacity $K_t$ of the prey. In this system, $X_t^1$ and $X_t^2$ correspond to the prey and predator densities, respectively. Let $\rho$ be the intrinsic rate of growth of the prey, $a$ the attack rate of the predator on the prey, $b$ the conversion of prey to predator, $c$ the predator's intraspecific competition term, and $d$ the predator's per-capita density-independent death rate. The resulting hierarchically ordered model is given by 
\begin{equation}\label{eq:pred-prey}
\begin{aligned}
X_{t+1}^1 =& X_t^1 \exp\left(\rho(1-X_t^1/K_{t+1})-a X_t^2\right)\\
X_{t+1}^2=& X_t^2 \exp\left(bX_t^1-cX_t^2-d\right).
\end{aligned}
\end{equation} 
As long as $K_t$ take values in a compact subset of $(0,\infty)$, Lemma~\ref{lem:bounded} implies that $X_t$ enters and remains in a compact subset $\s$ of $\R^2_+$. Let $\delta_{(0,0)}$ be the Dirac measure at the origin. Then $r^1(\delta_{(0,0)})=\rho>0$ and $r^2(\delta_{(0,0)})=-d$. Any other ergodic measure $\mu$ supported by $\s_0$ must be supported by prey axis $(0,\infty)\times \{0\}$. For such an ergodic $\mu$, Proposition~\ref{prop:invasion}(iii) implies that 
\[
0=r^1(\mu)=\rho-\rho\underbrace{\int x^1 \mu(dx^1)}_{=:\hat x^1}\E[1/K_t].
\]
Hence, $\hat x^1=1/\E[1/K_t]$ i.e. the harmonic mean of $K_t$. On the other hand, 
\[
r^2(\mu)=b\hat x^1-d=b/\E[1/K_t]-d.
\] 
If $r^2(\mu)>0$, then Theorem~\ref{thm:persist} with $p^1=2d/\rho$, $p^2=1$ implies \eqref{eq:pred-prey} is persistent. On the other hand, if $r^2(\mu)<0$, then Theorem~\ref{thm:exclusion2} (with the prey subsystem being the persistent subsystem) implies that asymptotic extinction of the predator occurs with high likelihood whenever the predator density is sufficiently low. 

\emph{\textbf{Biological implication.} As the $d/b$ can be interpreted as the predator's break even point (i.e. the prey density at which its per-capita growth rate is zero in the absence on interference), we have shown that predator-prey persistence requires that the harmonic mean of the carrying capacity exceeds the predator's break even density. As the harmonic mean is less than the arithmetic mean, it follows that fluctuations in the prey's carrying capacity inhibit predator-prey coexistence.}

\subsection{Trait evolution in a stationary environment}\label{subsec:trait}


Following \citet{lande-shannon-96}, we consider a population with density $X_t$  and a trait (e.g. log body size) $z$ which is normally distributed in the population with mean $\bar z_t$ in year $t$ and constant variance $\sigma^2$. The optimal trait value over the time interval $(t,t+1]$ is $\theta_{t+1}$. The fitness of a individual with trait $z$ over this time step equals
\[
W_{t+1}(z)=\exp(r_{\mbox{\tiny{max}}}-\gamma/2(z-\theta_{t+1})^2-a(\bar z_t) X_t)
\]
where $\gamma$ determines the strength of stabilizing selection and $a(\bar{z})>0$ corresponds to  the strength of intraspecific competition as a function on the mean trait value. Integrating across the normal distribution of the population gives the mean fitness $\overline{W}_{t+1}$ of the population over the time interval $(t,t+1]$
\[
\begin{aligned}
\overline{W}_{t+1}:=&\frac{1}{\sqrt{2\pi\sigma^2}}\int_{-\infty}^{\infty}W_{t+1}(z)\exp(-(z-\bar z_t)/(2\sigma^2))\,dz\\
=&\frac{1}{\sqrt{\gamma\sigma^2+1}}\exp\left(r_{\mbox{\tiny{max}}}-\frac{(\bar{z}_t-\theta_{t+1})^2\gamma}{2\sigma^2\gamma+2}-a(\bar{z}_t)X_t \right).
\end{aligned}
\]
\citet{lande-76} has shown that if $h^2 \in (0,1]$ is the heritability of the trait $z$, then the population and the mean trait dynamics can be approximated by
\begin{equation}\label{ee:one}
\begin{aligned}
X_{t+1}=&X_t \overline{W}_{t+1}\\
\bar z_{t+1}=&\bar z_t + \sigma^2 h^2 \frac{\partial \log \overline{W}_{t+1}}{\partial \bar z_t}\\
=& (1-\alpha)\bar{z}_t+\alpha \theta_{t+1}-\sigma^2h^2a'(\bar{z}_t)X_t\mbox{ where }\alpha=\frac{ h^2\sigma^2\gamma }{\sigma^2\gamma+1}<1.
\end{aligned}
\end{equation}
To finalize the model, we model the optimal trait dynamics as a first-order autoregressive process:
\begin{equation}\label{ee:two}
\theta_{t+1}=\rho \theta_t + \sqrt{1-\rho^2}\xi_{t+1}
\end{equation}
where $\rho\in (-1,1)$ determines the temporal autocorrelation, and $\xi_1,\xi_2,\xi_3,\dots$ is an i.i.d. sequence of random variables with mean $0$, variance $\tau^2$, and compact support. The rescaling of $\xi_{t+1}$ by $\sqrt{1-\rho^2}$ ensures that the variance of the stationary distribution of $\theta_t$ equals $\tau^2$ i.e. $\rho$ only influences the autocorrelation between $\theta_t$ and $\theta_{t+1}$ but not the long-term variance of $\theta_t.$

In the context of our general model formulation~\eqref{eq:main},  $Y=(\bar z, \theta)$ is the auxiliary vector, the fitness $f$ for the species is
\[
f(X,Y,\xi)=\frac{1}{\sqrt{\gamma\sigma^2+1}}\exp\left(r_{\mbox{\tiny{max}}}-\frac{(\bar{z}-\rho\theta-\sqrt{1-\rho^2}\xi)^2\gamma}{2\sigma^2\gamma+2}-a(\bar z)X \right),
\]
and the auxiliary function $G$ is
\[
G(X,Y,\xi)=\left((1-\alpha)\bar z+\alpha(\rho \theta+\sqrt{1-\rho^2}\xi)-\sigma^2h^2 a'(\bar{z})X,\rho\theta+\sqrt{1-\rho^2}\xi \right).
\]

\begin{figure}
\includegraphics[width=0.8\textwidth]{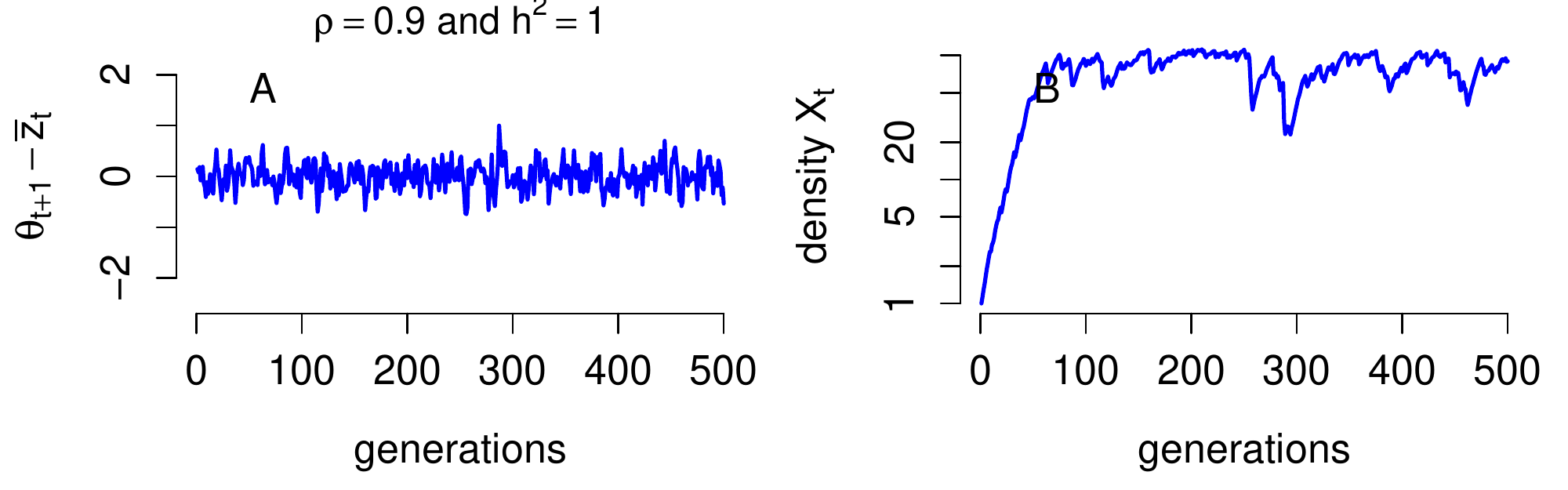}
\includegraphics[width=0.8\textwidth]{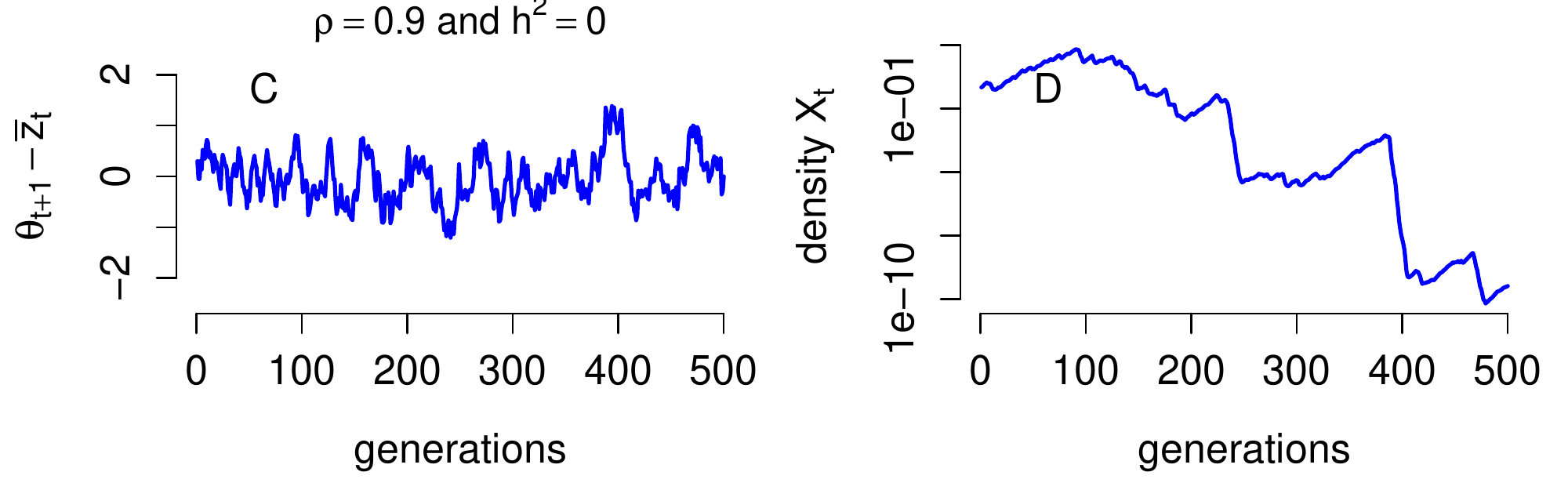}
\includegraphics[width=0.8\textwidth]{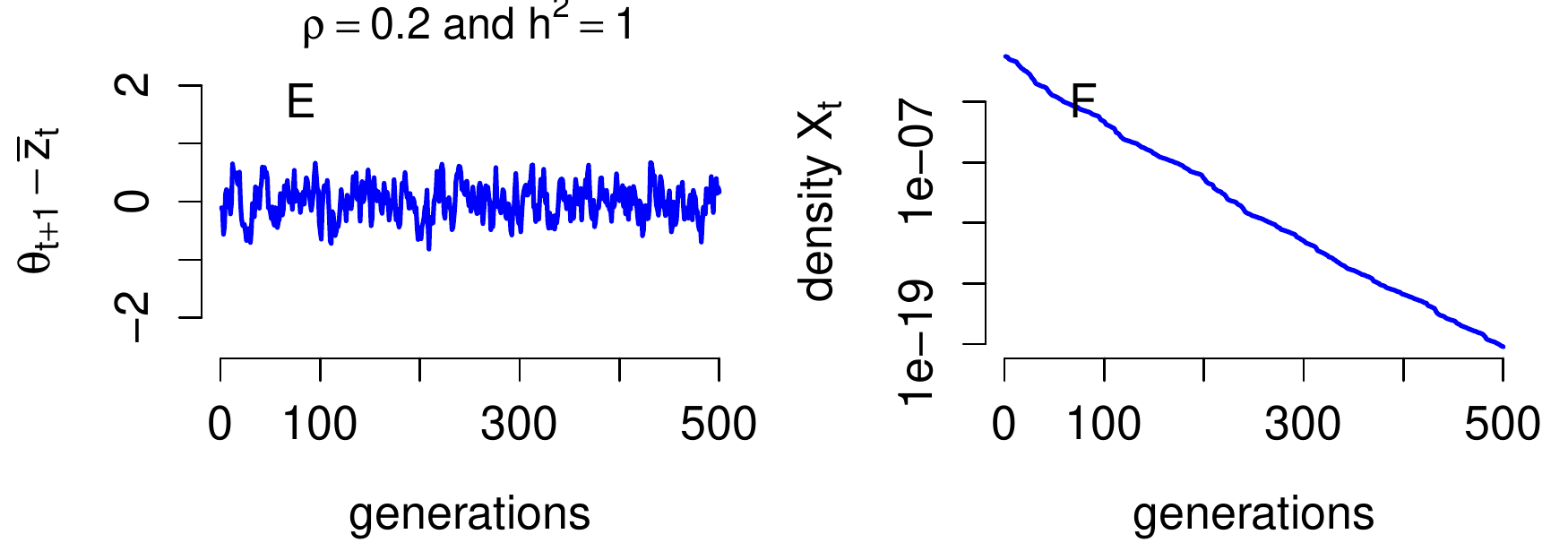}
\caption{\emph{Evolutionary rescue more likely in autocorrelated environments for \eqref{ee:one}--\eqref{ee:two}.} Left hand panels plot the difference between the mean population trait $\bar{z}_t$ and the optimal trait value $\theta_{t+1}$. The absolute value of this difference corresponds to the genetic load. Right hand panels plot the population density $X_t$. In (A)-(B), heritability and autocorrelation is high and the population persists i.e. $r(\mu)>0$. In (C)-(D), the population doesn't evolve (i.e. $h^2=0$) and autocorrelation is high, but the population tends toward extinction i.e. $r(\mu)<0$ and $\s_0$ is accessible. In (E)-(F), heritability is high but autocorrelation is low, but the population tends toward extinction.  Parameters:$\gamma=1$, $\sigma=0.5$,
$r_{\mbox{\tiny max}}=0.25$, $h^2, \rho$ as shown in panels, $\tau=0.5$, and
$a=0.001$}\label{fig:ER}
\end{figure}

Assume there exists a compact set $\s\subset [0,\infty)\times (-\infty,\infty)^2$ such that $Z_t$ enters and remains in $\s$ for sufficiently large $t$.  For example, this holds when $a(\bar{z})$ is a fixed positive constant. The extinction set $\s_0=\{z=(0,y)\in \s\}$ corresponds to extinction of the population. On this extinction set, the auxiliary dynamics $Y_t$ are a first-order, bivariate auto-regressive process:
\begin{equation}
\begin{pmatrix}\bar{z}_{t+1}\\ \bar{\theta}_{t+1}\end{pmatrix}=\underbrace{
\begin{pmatrix}
1-\alpha&\rho\alpha\\
0& \rho
\end{pmatrix}}_{=:A}\begin{pmatrix}\bar{z}_{t}\\ \bar{\theta}_{t}\end{pmatrix}+
\underbrace{\begin{pmatrix}\alpha \sqrt{1-\rho^2} \xi_{t+1}\\ \sqrt{1-\rho^2} \xi_{t+1}\end{pmatrix}}_{:=b_{t+1}}.
\end{equation}
As the spectral radius of $A$ equals $\max\{1-\alpha,\rho\}$ which is strictly less than one, the dynamics of $Y_{t+1}$ on the extinction set converge to a unique stationary  solution $\widehat Y_t= (\widehat {\bar z_t}, \widehat \theta_t)$ (see, e.g., \citet{diaconis-freedman-99}). The covariance matrix $\rm{Cov}[\widehat Y_t]$ of this stationary solution is given by (see, e.g., \citet{schreiber-moore-18})
\[
\rm{vec}(\rm{Cov}[\widehat Y_t])=(\rm{Id}-A\otimes A)^{-1} \rm{vec}(\rm{Cov}[b_{t}])
\]
where $\rm{vec}$ denotes the vec operator that concatenates the columns of the matrix into a column vector and $\otimes$ denotes the Kroenker product. The entries of this covariance matrix are given by
\begin{eqnarray*}
\rm{Var}[\widehat\theta_t ]&=&\tau^2,\\
\rm{Var}[\widehat {\bar z_t}]&=&\tau^2 \frac{\alpha(\alpha-1)\rho-\alpha}{(\alpha^2-3\alpha+2)\rho+\alpha-2}\mbox{, and }\\
\rm{Cov}[\widehat {\bar z_t},\widehat \theta_{t}]&=&\tau^2 \frac{\alpha}{(\alpha-1)\rho+1}.
\end{eqnarray*}
Furthermore, as $\E[\widehat \theta_t, \widehat \theta_{t+1}]=\rho V$, we have
\[
\rm{Cov}[\widehat {\bar z_t},\widehat \theta_{t+1}]=\tau^2 \frac{\rho\alpha}{(\alpha-1)\rho+1},
\]
and
\[
\rm{Var}[\widehat{\bar z}_{t}-\widehat \theta_{t+1}]=\frac{1-\rho}{1-(1-\alpha)\rho}\frac{2\tau^2}{2-\alpha}.
\]
Let $\mu$ be the law of the stationary solution $(0,\widehat Y_t)$ which is the unique ergodic probability measure on the extinction set $\s_0$. Then
\begin{equation}\label{eq:rmu}
r(\mu)
 =r_{\mbox{\tiny{max}}}-\frac{\gamma}{2}\left(\sigma^2+\rm{Var}[\widehat {\bar z_t}- \widehat{\theta}_{t+1}]\right)= r_{\mbox{\tiny{max}}}-\frac{\gamma}{2}\left(\sigma^2+\frac{1-\rho}{1-(1-\alpha)\rho}\frac{2\tau^2}{2-\alpha}\right).
\end{equation}
Theorem~\ref{thm:persist} implies if $r(\mu)>0$ then the population stochastically persists. Conversely if $r(\mu)<0$, then Theorem~\ref{thm:exclusion1} implies that the population goes extinct asymptotically with positive probability whenever $X_0$ is sufficiently low. Furthermore, if \[\P\left[\frac{\gamma (\widehat{\bar{z}}_t-\widehat{\theta}_{t+1})^2}{2\sigma^2\gamma+2}>r_{\mbox{\tiny{max}}}\right]>0\] and $r(\mu)<0$, then $\s_0$ is accessible and Corollary~\ref{cor1} implies that $\lim_{t\to\infty}X_t= 0$ with probability one.

\begin{figure}
\includegraphics[width=0.7\textwidth]{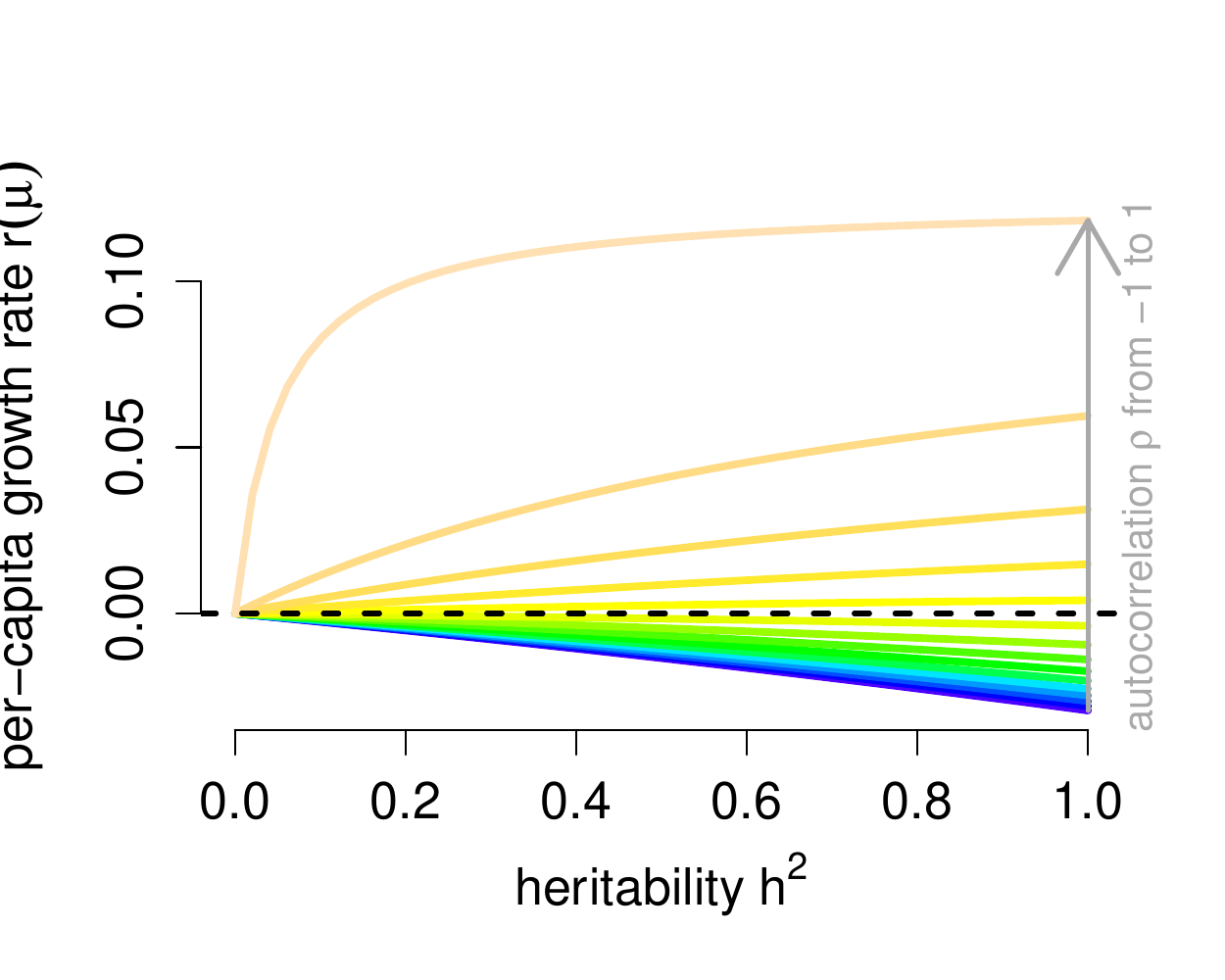}
\caption{The effects of heritability $h^2$ and temporal autocorrelation on the realized per-capita growth rate $r(\mu)$. With low autocorrelation, greater heritability decreases $r(\mu)$ and can lead to extinction. With higher autocorrelation, $r(\mu)$ increases with $h^2$ and fast evolution can rescue the population. Parameter values as in Figure~\ref{fig:ER}.}\label{fig:ER2}
\end{figure}

\emph{\textbf{Biological implications.} Equation~\eqref{eq:rmu} provides several insights about the effects of trait evolution in a fluctuating environment on population persistence. First, $r(\mu)$ decreases  with the strength $\gamma$ of stabilizing selection and the environmental variation $\tau^2$ in the optimal trait value. Intuitively, both of these quantities increase the long-term genetic load (i.e. the  ``reduction in the mean fitness of a population relative to a population composed entirely of individuals having optimal genotypes'' \citep{whitlock-davis-11}) and, consequently, the chance for persistence. Second, $r(\mu)$ increases with the temporal autocorrelation $\rho$ of the optimal trait in the changing environment. Intuitively, the stronger the temporal autocorrelation, the more time trait evolution has to track the optimal trait (compare Figs.~\ref{fig:ER}A,B to C,D).  Indeed, in the limit $\rho\uparrow1$, there is no effect of  $\tau^2$ on population persistence. Finally, the speed of evolution, which is determined by $\alpha$, plays an important role on persistence. In the limit of no evolution (i.e. $\alpha\downarrow 0$ ), the genetic variance and environmental variance contributing equally to a reduction in the realized per-capita growth rate i.e.
$
r(\mu)
 =r_{\mbox{\tiny{max}}}-\frac{\gamma}{2}\left(\sigma^2+2\tau^2\right)
$. Alternatively, if the speed of evolution is maximized (i.e. $\alpha\uparrow 1$), then in this limit $r(\mu)= r_{\mbox{\tiny{max}}}-\frac{\gamma}{2}\left(\sigma^2+2(1-\rho)\tau^2\right)$. Hence, if environmental fluctuations are nearly perfectly autocorrelated (i.e. $\rho\approx 1$), then trait evolution eliminates the environmental reduction in the per-capita growth rate. In contrast, if the environmental fluctuations are strongly negatively autocorrelated (i.e. $\rho\approx -1$), then fast evolution causes a twofold reduction in the realized per-capita growth rate. Intuitively, in negatively autocorrelated environments, passing on the optimal trait for the parents to the offspring provides the offspring with a suboptimal trait. Hence, in strongly negatively autocorrelated environments, non-evolving populations may be more likely to persist than evolving populations.} Several of these conclusions are illustrated in Fig.~\ref{fig:ER2}

\subsection{A structured SIS disease model}

To illustrate how auxiliary variables can account for population structure, we analyze a spatially structured disease model. A closely related model in continuous time was analyzed by \citet{benaim-strickler-19}. In this model, there are $k$ subpopulations of individuals with densities $S^i_t$ and $I^i_t$ of susceptible and infected individuals, respectively, in the $i$-th subpopulation. The subpopulation sizes are constant with a density of $N^i=S^i_t+I^i_t$ individuals in the $i$-th subpopulation. The length of time between updates corresponds to the infection duration. After this infectious period, infected individuals become susceptible again. Individuals in subpopulation $i$ experience a mortality rate of $\mu_{t+1}^i$ over the time interval $(t,t+1]$. Thus, the fraction of individuals surviving in this subpopulation is $\exp(-\mu_{t+1}^i)$. The susceptible individuals that survive in subpopulation $i$ become infected with a net force of infection $\sum_{j=1}^n\beta^{ij}I^j_t$ where $\beta^{ij}$ is the per-capita force of infection from subpopulation $j$ to subpopulation $j$. If infections are Poisson distributed, the fraction of susceptible individual escaping infection is $\exp(-\sum_{j=1}^k \beta^{ij}I^j_t)$. Under these assumptions, the SIS dynamics are given by the following $k$ equations:
\begin{equation}\label{eq:SIS1a}
I_{t+1}^i = (N^i-I^i_t)e^{-\mu_{t+1}^i}\left(1-\exp\left(-\sum_{j=1}^k \beta^{ij}I^j_t\right)\right) \quad i=1,2,\dots,k.
\end{equation}
To complete the model, we assume that the mortality rates are given by a first order auto-regressive process
\begin{equation}\label{eq:SIS1b}
\mu_{t+1}=A\mu_t+\xi_{t+1}
\end{equation}
where $\mu_t=(\mu_t^1,\dots,\mu_t^k)$ is the vector of mortality rates, $A$ is a non-negative matrix with a spectral radius $<1$ and $\xi_t$ are i.i.d. random, non-negative vectors taking values in a compact set.

To place \eqref{eq:SIS1a}-\eqref{eq:SIS1b} into our framework, define $X_t$ to be the total density of infected and $F^i_t$ to be the fraction of  total infected in subpopulation $i$:
\[
X_t=\sum_{i=1}^k I_t^i \mbox{ and }F^i_t = I^i_t/X_t.
\]
If we define $\phi(x)=(1-e^{-x})/x$, then for $X_t>0$ the dynamics of \eqref{eq:SIS1a}-\eqref{eq:SIS1b} are equivalent to 
\begin{equation}\label{eq:SIS2}
\begin{aligned}
X_{t+1}=&  \sum_{i=1}^k (N^i-F^i_tX_t) e^{-\mu_{t+1}^i}\left(\sum_{j=1}^k \beta^{ij} F_t^j X_t\right) \phi\left(\sum_{j=1}^k \beta^{ij}F_t^jX_t\right)\\
F_{t+1}^i=& (N^i-F^i_tX_t) e^{-\mu_{t+1}^i}\left(\sum_{j=1}^k \beta^{ij} F_t^jX_t\right) \phi\left(\sum_{j=1}^k \beta^{ij}F_t^jX_t\right)/X_{t+1}\\
\mu_{t+1}=&A\mu_t+\xi_{t+1}.
\end{aligned}
\end{equation}
As $\lim_{x\to 0}\phi(x)=1$, for $X_t=0$, the dynamics extend uniquely and continuously  to
\begin{equation}\label{eq:SIS3}
\begin{aligned}
X_{t+1}=&0\\
F_{t+1}^i=&N^i e^{-\mu_{t+1}^i}\left(\sum_{j=1}^n \beta^{ij} F_t^j\right)/\left(\sum_{j,\ell=1}^k  e^{-\mu_{t+1}^j}N^j \beta^{j\ell} F_t^\ell \right)\\
\mu_{t+1}=&A\mu_t+\xi_{t+1}.
\end{aligned}
\end{equation}
Thus,  \eqref{eq:SIS1a}-\eqref{eq:SIS1b} can be written in the desired form \eqref{eq:main} with $Y_t=(F_t,\mu_t)$. The state space for these dynamics is $\s=[0,\sum_i N^i]\times \Delta$ where $\Delta=\{x\in [0,1]^k:\sum_{i=1}^k x^i=1\}$ is the probability simplex. The extinction set is $\s_0=\{0\}\times \Delta.$

The $F_t$ dynamics on the extinction set $\s_0$ correspond to projection of the following stochastic linear difference equation onto the simplex (i.e. $F_t=V_t/\sum_i V^i_t$)
\begin{equation}\label{eq:SIS4}
V_{t+1}=\rm{diag}(e^{-\mu_{t+1}^i}N^i)B V_t
\end{equation}
where $B=(\beta^{ij})_{i,j}$ and $\rm{diag}(a_i)$ denotes a diagonal matrix with entries $a_1,\dots,a_n$ If $B$ is a primitive matrix, then the random Perron Frobenius Theorem (see, e.g., \citet[Proposition 8.3]{jmb-14}) implies that $F_t$ converges to a unique stationary distribution $\widehat{ F}$. Let $\mu$ be the product of the Dirac measure at $\{0\}$ and the law of $\widehat F$. Then $r(\mu)$ corresponds to the dominant Lyapunov exponent of \eqref{eq:SIS4}. If $r(\mu)>0$, Theorem~\ref{thm:persist} implies that the disease persists in probability and almost surely. Conversely if $r(\mu)<0$, then Theorem~\ref{thm:exclusion1} implies that the disease is unlikely to establish whenever arriving at low densities.

In general, finding explicit expression for this dominant Lyapunov exponent $r(\mu)$ is challenging. However, in the special case, of perfect mixing (i.e. $B=\beta J/n$ where $J$ is the matrix of ones and $\beta>0$), work of \citet{metz-etal-83} implies that
\begin{equation}\label{eq:SISr}
r(\mu)=\E\left[\log\left(\frac{1}{k}\sum_{i=1}^k \beta N^i e^{-\mu_t^i}\right)\right].
\end{equation}
In the special case of a single subpopulation (say $i$), $r(\mu)=\E[\log (\beta N^i e^{\mu_t^i})]$ and persistence requires $\log(\beta N^i)>\E[\mu_t^i]$. Even if persistence within any subpopulation isn't possible, the coupling of the subpopulations by infection can allow for persistence. To see why, applying Jensen's inequality to the inner and outer averages of \eqref{eq:SISr} and assuming there is variation in the $\mu_t^i$, we get
\begin{equation}\label{eq:jensen}
\log\left(\frac{1}{k}\sum_{i=1}^k \beta N^i \E\left[e^{-\mu_t^i}\right]\right)>r(\mu)>\frac{1}{k}\sum_{i=1}^k\E\left[\log\left( \beta N^i e^{-\mu_t^i}\right)\right].
\end{equation}
The second equality in this equation implies that $r(\mu)$ for the structured population is strictly greater than the averaged realized per-capita growth rate for the isolated subpopulations.

\emph{\textbf{Biological implications.} The first inequality in \eqref{eq:jensen} implies that disease persistence is possible for the entire population despite it being unable to persist within any subpopulation. In a fully mixed population this only occurs if at least one of arithmetically averaged reproductive numbers $\beta N_i \E[e^{-\mu_t^i}]$ is greater than one. However, even if  all the arithmetically averaged reproductive numbers are less than one, then an approximation developed in \citep{prsb-10} implies that imperfect mixing and positive temporal autocorrelations in the mortality rates $\mu_t^i$ can allow for disease persistence}.

\section{Discussion}\label{sec:end}

We developed criteria for stochastic persistence and exclusion for stochastic, discrete-time multispecies models that allow for internal and external variables. Our criteria are based on species realized per-capita growth rates when rare~\citep{turelli-81,chesson-82,chesson-ellner-89,chesson-94} evaluated at ergodic probability measures supporting a subset of species. For the non-supported species, these realized per-capita growth rates correspond to Lyapunov exponents~\citep{ferriere-gatto-95} that characterize the rate of growth of infinitesimal additions of the missing species. As pioneered by \citep{hofbauer-81} for ordinary differential equations, our stochastic persistence criterion requires the existence of positive weights associated with each species such that this weighted combination of realized per-capita growth rates, what we call the realized community per-capita growth rate, is positive for all ergodic probability measures supporting a subset of species. When this occurs, we show there is persistence from the ensemble point of view (i.e. persistence in probability) and the typical trajectory point of view (i.e. almost-sure persistence). These results extend prior results~\citep{jmb-11,jdea-11,jmb-14} for stochastic difference equations in four ways. First, we provide explicit lower bounds for the fraction of time trajectories spend near the boundary. Second, we prove persistence in probability for models with population structure without the additional accessibility assumption made in \citep{jmb-11} for unstructured population models. Moreover, we provide explicit lower bounds for the probability of any species being below a low density far into the future. Third, our results provide explicit estimates for the time required to escape a neighborhood of the extinction set. This time to escape decreases with the realized community per-capita growth rate.  Finally, our results account for more types of internal variables than the multispecies, non-linear matrix models of \citep{jmb-14} as we illustrated with our example of trait evolution.

No understanding of persistence is complete without sufficient conditions for asymptotic exclusion of one or more species. Prior results about exclusion for stochastic difference equations have been limited to scalar models~\citep{chesson-82,gyllenberg-etal-94b,jmb-14,jbd-14} or monotone competing species models in i.i.d. environments~\citep{chesson-ellner-89}. Here we introduced two types of extinction results. The first result is the natural complement of our persistence theorem: if the community per-capita growth rate is negative for all ergodic probability measures supporting a subset of species, then the trajectories of communities starting near the extinction set are likely to asymptotically converge to this extinction set. To our knowledge, this is the first time for stochastic models such a criterion has been proven. This criterion is most useful for showing the entire extinction set is a stochastic attractor, as illustrated with the evolutionary game of rock-paper-scissors. For many other systems, however, the extinction set is not a stochastic attractor despite between some species being extinction prone. For example, in our model of competing species, both species can persist on their own, hence, the extinction set can not be an attractor as the origin is a stochastic repellor. For these models, we proved a discrete-time analog of an extinction result due to \citet{hening-nguyen-18} for unstructured Kolmogorov stochastic differential equation models. This result requires identifying a subset of species which satisfy the Hofbauer condition for stochastic persistence but for which all missing species have negative realized per-capita growth rates. As with our persistence results, we provide explicit lower bounds for the probability of extinction in terms of the realized per-capita growth rates. 

While our persistence and exclusion results are applicable to many types of models, several major mathematical challenges remain. First and foremost, as with the deterministic theory, there is a gap between the conditions for stochastic persistence (Theorem~\ref{thm:persist}) and exclusion (Theorems~\ref{thm:exclusion1} and \ref{thm:exclusion2}). In the deterministic persistence theory, this gap has been partially closed by considering Morse decomposition of the extinction set (i.e. the dynamics are gradient-like when collapsing a finite number of invariant sets to points) and allowing the species weights $p^i>0$ to be chosen separately for each of the invariant sets of the Morse decomposition~\citep{jde-00,garay-hofbauer-03,jde-10,jmb-18}. Having a similar result for stochastic models would further increase the applicability of the persistence and exclusion theory. Second, our results assume the dynamics remain on a compact set. While this assumption is biologically realistic, theoretical population biologists do build models that violate this assumption e.g. using lognormal random variables on $[0,\infty)^n$. Hence, extending our results to this setting would be useful and should be possible with the methods developed by \citet{hening-nguyen-18,benaim-18}. Third, often evaluating the persistence or extinction criteria for particular models can be exceptionally challenging. Hence, it will be useful to identify special classes of models beyond the stochastic Lotka-Volterra difference equations for which they can be explicitly evaluated. Moreover, the further development of approximation methods, such as the small-noise approximations of \citep{chesson-94}, is needed. Fourth, in deterministic theory, coexistence can be equated with the existence of a positive attractor~\citep{jtb-06} that need not be globally attracting i.e. need not be a permanent system~\citep{hofbauer-sigmund-98}. Developing methods to identify when there are positive stochastic attractors potentially coexisting with stochastic attractors on the extinction set will require going beyond realized per-capita growth rates (see, e.g., \citet{jbd-14} for single species models).

\vskip 0.1in
\noindent\textbf{Acknowledgments.} This work was supported in part by United States National Science Foundation Grant DMS-1716803 to SJS and Swiss National Science Foundation Grant 2000021−175728 to MB.

\section{Proofs}
\subsection{Proof of Proposition~\ref{prop:invasion}\label{proof:invasion}}
We begin with the following fundamental lemma. Recall $\xi_t$ take values in a polish space $\Xi$ and have a common law $m(d\xi)$.
\begin{lem}
\label{th:birk2}
Let $g : \s \times \Xi \mapsto \R$ be a measurable map such that $\sup_{z\in \s} \int g(z,\xi)^2 m(d\xi)<\infty. $ Define $\bar{g}(z) = \int g(z,\xi) m(d\xi).$ Then
\begin{itemize}
\item[(i)] For all $z \in \s$ and $Z_0 = z$
\[\lim_{t \to  \infty} \frac{\sum_{s = 0}^{t-1} g(Z_s,\xi_{s+1}) -
 \sum_{s = 0}^{t-1} \bar{g}(Z_s)}{t} = 0 \mbox{
with probability one.}\]

\item[(ii)] Let $\mu$ be an  invariant (respectively ergodic) probability measure for $(Z_t)$, then there exists a bounded measurable map $\hat{g}$ such that with probability one and for $\mu$-almost every $z$ \[\lim_{t \to  \infty} \frac{\sum_{s = 0}^{t-1} g(Z_s,\xi_{s+1})}{t} = \lim_{t \to \infty}  \frac{\sum_{s = 0}^{t-1} \bar{g}(Z_s)}{t} = \hat{g}(z)
\mbox{ when }Z_0=z.\]
 Furthermore $$\int \bar{g}(z) \mu(dz) = \int \hat{g}(z) \mu(dz) \mbox{ (respectively }
\hat{g}(z) = \int \bar{g}(z) \mu(dz) \quad \mu-\mbox{almost surely} ).$$
\end{itemize}
\end{lem}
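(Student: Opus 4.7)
The plan is to derive (i) from a martingale strong law of large numbers and (ii) by combining (i) with Birkhoff's pointwise ergodic theorem applied to the stationary version of the chain under $\mu$.

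For (i), I would set up the martingale $M_t := \sum_{s=0}^{t-1}[g(Z_s,\xi_{s+1}) - \bar g(Z_s)]$ with respect to the filtration $\mathcal{F}_s := \sigma(Z_0,\xi_1,\ldots,\xi_s)$. Since $\xi_{s+1}$ is independent of $\mathcal{F}_s$ with law $m$ and $Z_s$ is $\mathcal{F}_s$-measurable, $\E[g(Z_s,\xi_{s+1})\mid \mathcal{F}_s]=\bar g(Z_s)$, so $(M_t)$ is indeed a martingale. The $L^2$ hypothesis yields the uniform conditional-variance bound
\[
\E\bigl[(g(Z_s,\xi_{s+1})-\bar g(Z_s))^2\bigm| \mathcal{F}_s\bigr] \le \int g(Z_s,\xi)^2\,m(d\xi) \le K := \sup_{z\in\s}\int g(z,\xi)^2\,m(d\xi) < \infty,
\]
so $\sum_{s\ge 1}\E[(M_{s+1}-M_s)^2]/s^2 < \infty$, and Kolmogorov's strong law for orthogonal sequences (equivalently, the $L^2$ martingale SLLN) gives $M_t/t\to 0$ almost surely for every initial condition $Z_0=z$, which is (i).

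For (ii), I would first note that $\bar g$ is uniformly bounded by $\sqrt K$ via Cauchy--Schwarz. Given an invariant measure $\mu$, I would work under the law $\P_\mu$ obtained by taking $Z_0\sim \mu$, so that $(Z_t)_{t\ge 0}$ is stationary, and apply Birkhoff's pointwise ergodic theorem to the shift on trajectory space with observable $\bar g$. This supplies a bounded measurable $\hat g$, namely the conditional expectation of $\bar g$ with respect to the shift-invariant $\sigma$-algebra, satisfying
\[
\frac{1}{t}\sum_{s=0}^{t-1}\bar g(Z_s)\longrightarrow \hat g(Z_0) \quad \P_\mu\text{-a.s.}, \qquad \int \hat g\,d\mu = \int \bar g\,d\mu.
\]
Disintegrating $\P_\mu = \int \P_z\,\mu(dz)$ via the Markov property and Fubini converts this into the statement that for $\mu$-a.e.\ $z$, the convergence holds $\P_z$-almost surely starting from $Z_0=z$. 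Combining with (i) gives the companion limit for $\tfrac{1}{t}\sum_{s=0}^{t-1} g(Z_s,\xi_{s+1})$. When $\mu$ is ergodic, the shift-invariant $\sigma$-algebra is $\P_\mu$-trivial, forcing $\hat g$ to be $\mu$-a.s.\ constant and equal to $\int \bar g\,d\mu$.

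The main subtlety I anticipate is the disintegration step: Birkhoff's theorem produces a null set $N$ with $\P_\mu(N)=0$, and one must argue that its fibers over $Z_0$ have zero $\P_z$-measure for $\mu$-a.e.\ $z$. This is a standard Fubini argument using a regular conditional probability of $\P_\mu$ given $Z_0$ together with the uniqueness of the Markov law of $(Z_t)$ started from $z$, but it is the one point where the Markov (rather than merely stationary) structure of the process is essential; everything else is a clean application of the martingale SLLN and Birkhoff's theorem enabled by the $L^2$ control on $g$.
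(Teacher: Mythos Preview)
Your proposal is correct and follows essentially the same route as the paper: for (i) the paper also sets up $M_t=\sum_{s=0}^{t-1}(g(Z_s,\xi_{s+1})-\bar g(Z_s))$ as a square-integrable martingale and invokes a martingale strong law (they use the angle-brackets version from Williams, Theorems~12.13a and~12.14a, rather than the Kolmogorov $\sum \sigma_s^2/s^2<\infty$ criterion, but the two are interchangeable here), and for (ii) the paper simply cites Birkhoff's ergodic theorem for stationary Markov chains (Meyn--Tweedie, Theorem~17.1.2), which packages exactly the disintegration argument you spell out.
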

\begin{proof} To prove the first assertion, let $C=\sup_{z\in \s}\int g(z,\xi)^2m(d\xi)$ and define the martingale
\[
M_t=\sum_{s=0}^{t-1}\left(g(Z_s,\xi_{s+1})-\bar{g}(Z_s)\right).
\]
with respect to  the sigma-algebra $\mathcal F_t$ generated by $(Z_0,\xi_0),\dots,(Z_t,\xi_t)$. $M_t$ has square integrable martingale differences as $\E[(M_t-M_{t-1})^2|\mathcal F_{t-1}]\le 2C^2.$ Define the previsible increasing process $\langle M\rangle_t$ by $\langle M\rangle_0=0$ and $\langle M\rangle_t=\langle M\rangle_{t-1}+\E[(M_t-M_{t-1})^2|{\mathcal F}_{t-1}]$. $\langle M\rangle_t$ is known as the angle-brackets process \citep[Section 12.12]{williams-91}. By construction, we have $\langle M\rangle_t\le 2C^2 t$. Define $\langle M\rangle_\infty =\lim_{t\to\infty} \langle M\rangle_t$, which exists (possibly infinite) as $\langle M\rangle_t$ is increasing. On the event $\langle M\rangle_\infty<+\infty$, \citet[Theorem 12.13a]{williams-91} implies that $\lim_{t\to\infty}M_t$ exists and is finite. In particular, $\lim_{t\to\infty}M_t/t=0$ on the event $\langle M\rangle_\infty<+\infty.$ On the event $\langle M\rangle_\infty=+\infty$, \citet[Theorem 12.14a]{williams-91} implies that $\lim_{t\to\infty}M_t/\langle M\rangle_t=0$. In particular, as $\langle M\rangle_t\le 2C^2t$, $\lim_{t\to\infty}M_t/t=0$ on the event $\langle M\rangle_\infty=+\infty$. Thus, $\lim_{t\to\infty}M_t/t=0$ with probability one which completes the proof of the first assertion. 

The second assertion follows from  Birkhoff's ergodic theorem applied to stationary Markov Chains (see \cite{meyn-tweedie-09}, Theorem 17.1.2)
\end{proof}

Equation \eqref{eq:r} and the second assertions of Proposition~\ref{prop:invasion}  follow directly from the preceding  lemma applied to $g(z,\xi) = \log f^i(z,\xi)$.  Assumption \textbf{A4} implies that $\log f^i(z,\xi_t)$ are uniformly integrable (UI). Therefore, equation \eqref{eq:r2} follows from \eqref{eq:r} and the UI convergence theorem. The first half of the third assertion follows from ergodicity. To prove the second half of the third assertion, let $i\in S(\mu)$. By assertion $(i)$ of Proposition~\ref{prop:invasion}
\[\lim_{t \rightarrow \infty} \frac{\log X^i_t}{t} = \hat r^{i}(z) \mbox{ where } Z_0=z\] for $\mu$-almost $z \in \{(x,y)\in \s: x^i>0\}.$
Let $\s^{i,\eta} = \{(x,y) \in \s \: : x^i \geq \eta\}$ and $\eta^*>0$ be such that $\mu(\s^{i,\eta})>0$ for all $\eta\le \eta^*$. By Poincar\'e Recurrence Theorem, for $\mu$ almost all $z \in \s^{i,\eta}$
\[\P_x[ Z_t \in \s^{i,\eta} \mbox{ infinitely often } ] = 1\]
for $\eta\le\eta^*$.
Thus $\widehat{r}_{i}(z) = 0$ for $\mu$-almost all $z \in \s^{i,\eta}$ with $\eta\le \eta^*$.
Hence $\widehat{r}^{i}(z) = 0$ for $\mu$-almost all $z \in \bigcup_{n \in \mathbf{N}} \s^{i,1/n} = \{(x,y) \in \s \: : x^i > 0\}.$ This proves assertion $(iii).$

\subsection{Proof of Proposition~\ref{prop:key}\label{proof:key}}
We begin by showing there is $T\ge 1$ and $\alpha\in (0,1)$ such that inequality \eqref{eq:key1} holds i.e. $P_TV(z)-V(z)\le -\alpha$ for all $z\in \s_0$. Suppose to the contrary. Then there exists an increasing sequence of times $t_k\uparrow \infty$, a decreasing sequence of positive reals $\alpha_k\downarrow 0$ and a sequence of points $z_k$ in $\s_0$ such that $P_{t_k}V(z_k)-V(z_k)\ge -\alpha_k$ for all $k$. Define a sequence of Borel  probability measures $\mu_k$ on $\s_0$ by
\[
\int h(z) \mu_k(dz)= \frac{1}{t_k}\E_{z_k}\left[\sum_{\tau=0}^{t_k-1} h(Z_\tau)\right]
\mbox{ for any continuous $h:\s_0\to\R$.}
\]
Let $\mu$ be a weak* limit point of the $\mu_i$, which exists as $\s_0$ is compact. By a standard argument due{} to Khasminskii (see, e.g.,  \citet[Theorem 1.1]{kifer-88}), $\mu$ is an invariant measure for the Markov chain and by weak* compactness, $\sum_i p^i r^i(\mu)\le 0$. By the ergodic decomposition theorem, there exists an ergodic measure $\nu$ supported on $\s_0$ such that inequality \eqref{eq:criterion} is violated; a contradiction.

Let $\phi(x)=e^x-1-x$. For any real $C$,
\[
|\phi(-\theta C)|\le \sum_{k\ge 2} \frac{|\theta C|^k}{k!}
\le \theta^2 e^C \mbox{ whenever $|\theta|\le 1$.}
\]
Our assumption that $\sup_{z,\xi}|\log f^i(z,\xi)|<\infty$ implies that there exists $C>0$ such that
\[|\sum_{\tau=0}^{T-1}\sum_i p^i \log f^i (z_\tau,\xi_{\tau})|\le C\] for any $z_0,\dots,z_{T-1}\in \s_0$ and $\xi_1,\dots,\xi_T\in \Xi.$ Thus, for $z\in \s\setminus \s_0$
\begin{eqnarray*}
P_T V_\theta(z)&=& \E_z\left[\exp(-\theta \sum_i p^i \log X_T^i) \right]\\
&=& \E_z\left[\exp\left(-\theta\sum_i p^i \left( \sum_{\tau=0}^{T-1}\log f^i(Z_\tau,\xi_{\tau+1})+\log x^i\right)\right)     \right]\\
&=& \E_z\left[\exp\left(-\theta\sum_i p^i \sum_{\tau=0}^{T-1}\log f^i(Z_\tau,\xi_{\tau+1}\right)\right]V_\theta(z)\\
&=& \E_z\left[1-\theta \sum_i p^i \sum_{\tau=0}^{T-1}\log f^i(Z_\tau,\xi_{\tau+1})+\phi\left(-\theta \sum_i p^i \sum_{\tau=0}^{T-1}\log f^i(Z_\tau,\xi_{\tau+1})\right) \right]V_\theta(z)\\
&\le & (1-\theta\alpha+\theta^2e^C )V_\theta(z).
\end{eqnarray*}
Choosing $\theta=\alpha\,e^{-C}/2$ and $\rho=1-\theta\alpha/2$ completes the proof of the proposition.

\subsection{Proof of Theorem~\ref{thm:persist}\label{proof:persist}}

In this section, we provide the details of the proof of Theorem~\ref{thm:persist} that are not presented in the main text. Through out this section, we assume that \eqref{eq:criterion} holds. Let $\theta>0$, $T\ge 1$, $\rho\in (0,1)$, and $\beta>0$ be as given in Proposition~\ref{prop:key}.

We begin proving persistence in probability for the general case of $T\ge 1$. Given any $z_0\in \s\setminus \s_0$, any integer $t=\ell T+s\ge 1$ with $\ell\ge 0$ and $0\le s\le T-1$, and $\eta>0$, Proposition~\ref{prop:key} implies
\begin{eqnarray*}
\P_{z_0}\left[Z_{t}\in \s_\eta\right]\min_{z\in \s_\eta\setminus\s_0}V_\theta(z)
&\le& \int_\s  V_\theta(z)(\delta_{z_0}P_{\ell T+s})(dz)\\
&\le &\rho^\ell \int_\s V_\theta(z)
(\delta_{z_0}P_s)(dz)+\frac{\beta}{1-\rho}\\
&\le& \rho^\ell \max_{0\le \tau \le T-1}\int_\s V_\theta(z)
(\delta_{z_0}P_\tau)(dz)+\frac{\beta}{1-\rho}
\end{eqnarray*}
As $\rho^\ell\to 0$ as $t\to \infty$, inequality \eqref{eq:lb} implies 
\[
\limsup_{t\to\infty}\P_{z_0}[Z_t\in \eta]\le a \eta^b  \mbox{ where }a=\frac{\beta}{a_0(1-\rho)}
\]
which completes the proof of persistence in probability.

To complete the proof of almost-sure persistence presented in the main text, we need the following lemma which follows the strategy of proof of \citet[Lemma 6]{jmb-11}.

\begin{lem}\label{lem:pi} For all $z\in \s\setminus \s_0$, the weak* limit points $\mu$ of $\Pi_t$ almost surely satisfy $\mu(\s_0)=0$.
\end{lem}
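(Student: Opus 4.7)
The strategy is to bound the $V_\theta$-integrals of the occupation measures along the sample path and then pass to weak* limits by lower semicontinuity. Since $V_\theta$ is continuous on $\s \setminus \s_0$ and extends to $+\infty$ on $\s_0$, the truncations $V_\theta^N := \min(V_\theta, N)$ are bounded and continuous on $\s$, and if $\Pi_{t_n} \to \mu$ weakly then
\[
\int V_\theta^N\, d\mu = \lim_n \int V_\theta^N\, d\Pi_{t_n} \le \liminf_n \frac{1}{t_n}\sum_{s=1}^{t_n} V_\theta(Z_s) \le \limsup_t \int V_\theta\, d\Pi_t.
\]
Monotone convergence as $N \uparrow \infty$ gives $\int V_\theta\, d\mu \le \limsup_t \int V_\theta\, d\Pi_t$, and because $V_\theta = +\infty$ on $\s_0$, finiteness of this integral forces $\mu(\s_0) = 0$. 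Thus the lemma reduces to establishing $\limsup_t \frac{1}{t}\sum_{s=1}^t V_\theta(Z_s) < \infty$ almost surely for every initial $z \in \s\setminus\s_0$.

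\textbf{Main step via the $T$-skeleton.} For this pathwise bound I would first restrict attention to the $T$-skeleton by setting $W_k := V_\theta(Z_{kT})$ and using the drift inequality $\E[W_{k+1}\mid\mathcal{F}_{kT}] \le \rho W_k + \beta$ from Proposition~\ref{prop:key}. Introducing the centered-increment martingale
\[
A_k := \sum_{j=0}^{k-1}\bigl(W_{j+1} - \E[W_{j+1}\mid\mathcal{F}_{jT}]\bigr),
\]
a telescoping rearrangement of $W_k - W_0$ together with $W_k \ge 0$ gives
\[
(1-\rho)\sum_{j=0}^{k-1} W_j \le W_0 + k\beta + A_k,
\]
so the problem reduces to showing $A_k/k \to 0$ almost surely. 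This is the step I expect to be the main obstacle, because the martingale differences are unbounded: they scale with $W_k$, which blows up near the extinction set. The plan is to obtain a uniform second-moment bound by applying Proposition~\ref{prop:key} with $\theta$ replaced by $2\theta$, which is still admissible provided $\theta$ is small enough; since $V_{2\theta} = V_\theta^2$, this yields $\sup_k \E_z[W_k^2] < \infty$. Assumption \textbf{A4} additionally provides the deterministic one-step comparison $W_{k+1} \le c_0 W_k$ with $c_0 = \exp(T\theta \sum_i p^i \sup_{z,\xi}|\log f^i|)$, so $\E[(W_{k+1} - \E[W_{k+1}\mid\mathcal{F}_{kT}])^2] \le c_0^2 \E[W_k^2]$ is bounded in $k$. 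Summability of $\sum_k 1/k^2$ and the $L^2$ martingale strong law (Chow/Kolmogorov) then yield $A_k/k \to 0$ a.s., hence $\limsup_k \frac{1}{k}\sum_{j=0}^{k-1} W_j \le \beta/(1-\rho)$ almost surely.

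\textbf{From skeleton to full average.} Finally, I would decompose the full-time sum by residue modulo $T$,
\[
\frac{1}{t}\sum_{s=1}^t V_\theta(Z_s) = \sum_{r=0}^{T-1} \frac{1}{t}\sum_{k:\,kT+r \le t} V_\theta(Z_{kT+r}),
\]
and observe that for each $r \in \{0,\dots,T-1\}$ the shifted skeleton $\{Z_{kT+r}\}_k$ is itself a Markov chain with transition $P_T$, so the argument of the previous paragraph applies verbatim after conditioning on $Z_r$. Summing the $T$ resulting almost sure bounds controls $\limsup_t \int V_\theta\, d\Pi_t$ as needed; combined with the Portmanteau/truncation step of the first paragraph this yields $\int V_\theta\, d\mu < \infty$ and hence $\mu(\s_0) = 0$ almost surely for every weak* limit point $\mu$ of $(\Pi_t)$.
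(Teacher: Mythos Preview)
Your argument is correct, but it follows a genuinely different route from the paper's. The paper does not touch $V_\theta$ or Proposition~\ref{prop:key} at all for this lemma. Instead it works with the \emph{linear} function $V(z)=-\sum_i p^i\log x^i$: writing $\sum_i p^i(\log X_t^i-\log x^i)=\sum_{s=0}^{t-1}\sum_i p^i\log f^i(Z_s,\xi_{s+1})$, Lemma~\ref{th:birk2}(i) replaces $\log f^i$ by $\overline{\log f^i}$ in the Ces\`aro average, and since $X_t^i$ is bounded above one gets $\sum_i p^i r^i(\mu)\le 0$ for every weak* limit $\mu$ of $\Pi_t$. After noting that such $\mu$ are invariant, one decomposes $\mu=(1-\alpha)\nu_0+\alpha\nu_1$ with $\nu_0(\s_0)=1$, $\nu_1(\s_0)=0$; Proposition~\ref{prop:invasion} gives $\sum_i p^i r^i(\nu_1)=0$, so $(1-\alpha)\sum_i p^i r^i(\nu_0)\le 0$, and criterion~\eqref{eq:criterion} forces $\alpha=1$.

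The paper's argument is shorter and uses only the bounded-increment martingale law already proved in Lemma~\ref{th:birk2}; it exploits the algebraic structure of the invasion rates (zero on the interior, positive on $\s_0$) rather than any drift inequality. Your approach costs more machinery---the $V_{2\theta}$ trick to control second moments, Chow's SLLN for unbounded martingale differences, and the mod-$T$ decomposition---but it buys something extra: you obtain the pathwise bound $\limsup_t\int V_\theta\,d\Pi_t\le\beta/(1-\rho)$ almost surely, which immediately yields the quantitative estimate $\mu(\s_\eta)\le a\eta^b$ for \emph{every} weak* limit point, not just for invariant measures. In effect your proof merges Lemma~\ref{lem:pi} and the subsequent uniform-bound step of Section~\ref{proof:persist} into a single argument.
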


\begin{proof}
 The process $\{Z_t\}_{t=0}^\infty$ being a (weak) Feller Markov chain over a compact set $\s$ implies that the set of weak* limit points of $\{\Pi_t\}_{t=0}^\infty$ is almost surely a non-empty compact subset of probability measures supported by $\s.$ Almost sure invariance of these weak* limit points follows from Lemma~\ref{th:birk2} $(i)$.

Assertion $(i)$ of  Lemma \ref{th:birk2} applied to $g(z,\xi) = \sum_i p^i\log f^i(z,\xi)$ gives
we have
\begin{eqnarray*}
\lim_{t \to \infty} \frac{\sum_i p^i\log X^i_t-\sum_ip^i\log x^i  - \sum_{s=0}^{t-1}\sum_ip^i \overline{\log f^i}(Z_s )}{t} = 0
\end{eqnarray*}
where $\overline{\log f^i}(z)=\int \log f^i (z,\xi)m(d\xi).$
 Since $\limsup_{t\to\infty}\frac{1}{t} \sum_ip^i\left(\log X^i_t-\log x^i\right) \le 0$ almost surely, we get that
 \begin{equation}
 \label{eq:noname}
 \sum_ip^ir^i(\mu) \le 0
 \end{equation}
 almost surely for any weak* limit point $\mu$ of $\{\Pi_t\}_{t=0}^\infty$.

 Since $\s_0$ and $\s\setminus\s_0$ are invariant, there exists $\alpha\in (0,1]$ such that $\mu=(1-\alpha)\nu_0+\alpha \nu_1$ where $\nu_0$ is an invariant probability measure with $\nu_0( \s_0)=1$ and $\nu_1$ is an invariant probability measure with $\nu_1(\s_0)=0$. By Proposition \ref{prop:invasion},  $\sum_ip^ir^i(\nu_1) = 0$. Thus,  $(1-\alpha) \sum_ip^ir^i(\nu_0) \leq 0$. Since by assumption $\sum_ip^ir^i(\nu_0) > 0$, $\alpha$ must be $1.$

\end{proof}

It follows that with probability one, the weak* limit points $\mu$ of $\Pi_t$ (given $Z_0=z\in \s\setminus\s_0$) are invariant probability measures satisfying $\mu(\s_0)=0$. To complete, the proof of almost-sure persistence, we need to provide uniform upperbounds to the amount of weight that invariant probability measures $\mu$ with $\mu(\s_0)=0$ place near $\s_0$. To this end, let $\mu$ be an invariant probability measure with $\mu(\s_0)=0$. As, in general, we can not assume that $\int V_\theta(z)\mu(dz)$ is well-defined and finite, we present a slightly longer argument than shown in the main text, to deal with this issue. This argument follows \citet[Proposition 4.24]{harier-18}. Let $M>0$ be any positive real. For two real numbers $a,b$, let $a\land b$ denote $\min\{a,b\}.$ Then invariance, Jensen's inequality and  Proposition~\ref{prop:key} imply
\[
\begin{aligned}
\int (V_\theta\land M)(z) \mu(dz)=&\int (P_{T}(V_\theta\land M))(z) \mu(dz) \\
=&\int \E_z[V_\theta(Z_T)\land M]\mu(dz) \le \int \E_z[V_\theta(Z_T)]\land M \mu(dz)\\
\le & \int (\rho V_\theta(z)+\beta)\land M \mu(dz).
\end{aligned}
\]
Iterating this inequality $k\ge 1$ times yields 
\begin{equation}\label{eq:addon}
\int (V_\theta\land M)(z) \mu(dz)\le \int ((\rho)^k V_\theta(z)+\beta/(1-\rho))\land M \mu(dz).
\end{equation}
By the dominated convergence theorem, taking the limit $k\to\infty$  yields 
\begin{equation}\label{eq:addon-2}
\int (V_\theta\land M)(z) \mu(dz)\le \frac{\beta}{1-\rho}.
\end{equation}
By the dominated convergence theorem, taking the limit $M\to\infty$ yields 
\begin{equation}\label{eq:addon-3}
\int V_\theta(z) \mu(dz)\le \frac{\beta}{1-\rho}.
\end{equation}
Thus, as in the main text, for any $\eta>0$,
\[
\mu(\s_\eta)\min_{z\in \s_\eta}V_\theta(z)\le\int V_\theta(z) \mu(dz) \le \frac{\beta}{1-\rho}.
\]
Inequality \eqref{eq:lb} implies that
\[
\mu(\s_\eta)\le a (\eta)^b \mbox{ for all }\eta\le 1 \mbox{ where }a=\frac{\beta}{a_0(1-\rho)}
\]
which completes the proof of almost-sure persistence.

\subsection{Proof of Theorem~\ref{thm:exclusion1} and Corollary~\ref{cor1}\label{proof:exclusion1}}
The strategy of this proof is based on the proof of Theorem 3.3 by \citet{benaim-lobry-16}. Define $V(x,y)=\sum_i p^i \log x^i $. As in the proof of Theorem~\ref{thm:persist}, assumption~\eqref{eq:exclusion1} and Proposition~\ref{prop:invasion} imply there exists $T\ge 1$, $\eta>0$ and $\alpha>0$ such that
\begin{equation}\label{eq:ex1a}
P_T V(z)-V(z)\le -\alpha \mbox{ for all }z\in \s_\eta.
\end{equation}
Moreover, using the same argument as found in Proposition~\ref{prop:key}, there exists $\rho\in(0,1)$ and $\theta>0$ such that (choosing $\eta>0$ to be smaller if necessary)
\begin{equation}\label{eq:ex2}
P_T V_\theta(z)\le \rho V_\theta(z)\mbox{ for all }z\in \s_\eta \mbox{ where }V_\theta(z)=\exp(\theta V(z)).
\end{equation}
Define the stopping time $\tau=\inf \{k: Z_{kT}\notin \s_\eta\}$ and the event $\mathcal{A}=\{\limsup_{t\to\infty} V(Z_t)/t\le -\alpha\}$. We will show that there exists a function $q(\varepsilon)\in (0,\eta)$ such that $\P_z(\mathcal{A})\ge q(\varepsilon)$ for $z\in \s_\varepsilon$ and $q(\varepsilon)\uparrow 1$ as $\varepsilon \downarrow 0.$ To this end, define $W_k=V_\theta(Z_{kT})$. Equation\eqref{eq:ex2} implies that $W_{k\land \tau}$ is a super martingale. Hence, if we define  $C(\varepsilon)=\sup_{z\in \s_\varepsilon} \exp(V(z))$ for any $\varepsilon>0$ (note that $C(\varepsilon)\downarrow 0$ as $\varepsilon\downarrow 0$), then
\[
\E\left[ W_{k\land\tau}\mathbf{1}_{\tau<\infty}\right]\le W_0 \le C(\varepsilon)^\theta \mbox{ whenever }Z_0=z\in \s_\varepsilon.
\]
Taking the limit as $k\to\infty$ and defining $D=\min_{z\in \s\setminus \s_\eta} \exp(V(z))>0$, the dominated convergence theorem implies that
\[
\P_z[\tau<\infty]D^\theta \le \E_z[W_\tau \mathbf{1}_{\tau<\infty}]\le C(\varepsilon)^\theta \mbox{ whenever }z\in \s_\varepsilon.
\]
Thus,
\[
\P_z[\tau=\infty]\ge 1-\left(\frac{C(\varepsilon)}{D}  \right)^\theta=:q(\varepsilon)\mbox{ whenever }z\in \s_\varepsilon
\]
and where $q(\varepsilon)\uparrow 1$ as $\varepsilon\downarrow 0.$

Next, consider the martingale, $M_n=\sum_{\ell=1}^n V(\ell T)-P_TV((\ell-1)T)$. By the strong law for martingales and inequality \eqref{eq:ex1a}, $\limsup_{n\to\infty} V(nT)/n\le -\alpha$ on the event $\tau=\infty.$ As we have assumed that $|\log f^i|$ is uniformly bounded on $\s\times \Xi$, $\limsup_{t\to\infty}V(t)/t \le -\alpha$ on the event $\tau=\infty$. Thus, as claimed, we have shown that
\[
\P_z[\mathcal{A}]\ge q(\varepsilon) \mbox{ for }z\in \s_\varepsilon.
\]
As $V(Z_t)\downarrow -\infty$ implies that $\rm{dist}(Z_t,\s_0)\downarrow 0$, the proof of Theorem~\ref{thm:exclusion1} is complete.

To prove Corollary~\ref{cor1}, define the event \[\mathcal{E}=\left\{\lim_{t\to\infty} \rm{dist}(Z_t,\s_0)=0\right\}.\]
Choose $\varepsilon>0$ such that $q(\varepsilon)>1/2.$
Define the stopping time
\[
{\widetilde\tau} = \inf \{t\ge 0 \ : Z_t \in \s_\epsilon\}.
\]  Since $\s_0$ is accessible, there exists $\gamma >0$ such that $\P_{z}[{\widetilde\tau}<\infty]>\gamma$ for all $z\in \s$.  The strong Markov property implies that for all $z\in \s$
\begin{eqnarray*}
\P_z\left[ \mathcal{E}\right] &=& \E_{z}\left[ \P_{Z_{\widetilde\tau}} \left[ \mathcal{E} \right] \mathbf{1}_{\{ {\widetilde\tau}<\infty\}} \right]+\E_{z}\left[ \P_{Z_{\widetilde\tau}} \left[ \mathcal{E} \right] \mathbf{1}_{\{ {\widetilde\tau}=\infty\}} \right]\\
&=& \E_{z}\left[ \P_{Z_{\widetilde\tau}} \left[ \mathcal{E} \right] \mathbf{1}_{\{ {\widetilde\tau}<\infty\}} \right]\ge \gamma/2.
\end{eqnarray*}

Let $\mathcal{F}_t$ be the $\sigma$-algebra generated by $\{Z_1,\dots,Z_t\}$.
The L\'{e}vy zero-one law implies that for all $z\in\s$, $\lim_{t\to \infty} \E_{z}\left[ \mathbf{1}_{\mathcal{E}} | \mathcal{F}_t\right]= \mathbf{1}_{\mathcal{E}}$ almost surely. On the other hand, the Markov property implies that $\E_{z}\left[ \mathbf{1}_{\mathcal{E}} | \mathcal{F}_t\right]=\E_z[\P_{Z_t}[{\mathcal E}]]\ge \gamma/2$ for all $z\in\s$. Hence $\P_{z}[\mathcal{E}]=1$ for all $z\in\s$.

\subsection{Proof of Theorem~\ref{thm:exclusion2}\label{proof:exclusion2}}
This proof follows the strategy of Theorem~\ref{thm:exclusion1}, but using a $V$ function introduced by \citet{hening-nguyen-18}.

Let $I\subset\{1,2,\dots,n\}$ be the subset of species and $\{p^i\}_{i\in I}$ the set of positive reals such that $\sum_{i\in I} p^i r^i(\mu)>0$ for every ergodic $\mu$ supported on $\s^I_0=\{z=(x,y)\in \s^I: \prod_{i\in I}x^i=0\}$ where $\s^{I}=\{z=(x,y)\in \s: x^i=0$ for all $i\notin I\}.$ Assume that $r^i(\mu)<0$ for all $i\notin I$ and ergodic $\mu$ such that $\mu(\s^I_+)=1$ where $\s^I_+=\s^I\setminus \s^I_0.$ Choose $\delta>0$ and $\alpha>0$ such that $-\sum_{i\in I} p^i r^i(\mu)+\delta \max_{i\notin I} r^i(\mu)\le -2\alpha$ for all ergodic probability measures $\mu$ such that $\mu(\s^I)=1.$

Define $V(x,y)=-\sum_{i\in I} p^i \log x^i+\delta \max_{i\notin I}\log x^i $ and 
\[ V_\theta(x,y)=e^{\theta V(x,y)}=\left(\prod_{i\in I}(x^i)^{-\theta p^i}\right) \max_{i\notin I}(x^i)^{\theta \delta}.\] As in the proof of Theorem~\ref{thm:persist} and Proposition~\ref{prop:invasion}, there exists $T\ge 1$, $\eta\in (0,1]$ such that
\begin{equation}\label{eq:ex1}
P_T V(z)-V(z)\le -\alpha \mbox{ for all }z\in K_\eta
\end{equation}
where $K_\eta=\{(x,y)\in \s: \max_{i\notin I} x^i \le \eta\}.$
Moreover, using the same argument as found in Proposition~\ref{prop:key}, there exists $\rho\in(0,1)$ and $\theta>0$ such that (choosing $\eta>0$ to be smaller if necessary)
\begin{equation}\label{eq:ex3}
P_T V_\theta(z)\le \rho V_\theta(z)\mbox{ for all }z\in K_\eta\setminus\s^I_0.
\end{equation}
Choose $\widetilde\eta>0$ such that $\{z: V_\theta(z)\le \widetilde \eta\}\subset K_\eta.$ Define the stopping time $\tau=\{k: V_\theta(Z_{kT})\ge \widetilde \eta\}$ and the event $\mathcal{A}=\{\limsup_{t\to\infty} V(Z_t)/t\le -\alpha\}$. Define $W_k=V_\theta (Z_{kT})$ and $\widetilde W_k = \widetilde \eta \land W_k$. Equation~\eqref{eq:ex3} implies that $W_{k\land \tau}$ is a super martingale. Thus, for any $k$, concavity of $t\mapsto \delta \land t$ and Jensen's inequality implies that
\[
\begin{aligned}
\E\left[ \widetilde W_{k\land\tau}\mathbf{1}_{\tau<\infty}\right]
\le& \widetilde \eta\land \E\left[ W_{k\land\tau}\mathbf{1}_{\tau<\infty}\right] \\ \le&
\widetilde \eta \land W_0 = \widetilde \eta\land \frac{\max_{i\notin I}(x^i)^{\theta\delta}}{\prod_{i\in I}(x^i)^{\theta p^i}} =:C(z)\\
&  \mbox{ whenever }Z_0=z=(x,y)\in K_\eta\setminus\s^I_0.
\end{aligned}
\]
Taking the limit as $k\to\infty$, the dominated convergence theorem implies that
\[
\P_z[\tau<\infty]\widetilde \eta \le \E_z[\widetilde W_\tau \mathbf{1}_{\tau<\infty}]\le C(z)\mbox{ for all }z\in K_\eta\setminus\s^I_0.
\]
Thus,
\[
\P_z[\tau=\infty]\ge 1-\frac{C(z)}{\widetilde \eta}\mbox{ for all }z\in K_\eta\setminus\s^I_0.
\]

Next, consider the martingale, $M_n=\sum_{\ell=1}^n V(\ell T)-P_TV((\ell-1)T)$. By the strong law for martingales and \eqref{eq:ex1}, $\limsup_{n\to\infty} V(nT)/n\le -\alpha$ on the event $\tau=\infty.$ As we have assumed that $|\log f^i|$ is uniformly bounded on $\s\times\Xi$, $\limsup_{t\to\infty}V(t)/t \le -\alpha$ on the event $\tau=\infty$. Thus, we have shown that
\[
\P_z[\mathcal{A}]\ge 1-\frac{C(z)}{\widetilde \eta} \mbox{ for all }z\in K_\eta\setminus\s^I_0.
\]
Let $M=\sup_{(x,y)\in \s} \max_i x^i>0.$ As $\limsup_{t\to\infty} -\frac{\log X^i_t}{t}\ge \limsup_{t\to\infty} -\frac{\log M}{t} =0$ with probability one whenever $X_0^i>0$,
\[
\begin{aligned}
-\alpha\ge \limsup_{t\to\infty}\frac{V(Z_t)}{t}=&\limsup_{t\to\infty}\frac{1}{t}\left( -\sum_{i\in I} p^i \log X_t^i +
\delta \max_{i\notin I}\log X_t^i \right)\\
\ge & \limsup_{t\to\infty}\frac{\delta}{t} \max_{i\notin I}\log X_t^i
\end{aligned}
\]
almost surely on the event $\mathcal{A}$ whenever $Z_0=z\in K_\eta\setminus \s_0^I$. Hence, on this event, $\lim$ $\rm{dist}(Z_t,\s_0^I)\downarrow 0$ and the proof of Theorem~\ref{thm:exclusion2} is complete.

\bibliography{feedbacks}

\end{document}